\definecolor{cblue}{rgb}{0.16, 0.32, 0.75}
\definecolor{cred}{rgb}{0.7, 0.11, 0.11}
\tikzset{->-/.style={decoration={
			markings,
			mark=at position .5 with {\arrow{<}}},postaction={decorate}}}
\tikzset{-<-/.style={decoration={
			markings,
			mark=at position .5 with {\arrow{>}}},postaction={decorate}}}
\pgfplotsset{compat=1.16}
\theoremstyle{plain}
\newtheorem{theorem}{Theorem}[section]
\newtheorem{lemma}[theorem]{Lemma}
\newtheorem{proposition}[theorem]{Proposition}
\newtheorem{hypothesis}{Hypothesis}
\theoremstyle{definition}
\theoremstyle{remark}
\newtheorem{example}[theorem]{Example}
\renewcommand{\i}{\mathrm{i}}
\newcommand{\e}{\mathrm{e}}
\renewcommand{\Re}{\operatorname{Re}}
\renewcommand{\Im}{\operatorname{Im}}
\newcommand{\R}{\mathbb{R}}
\renewcommand{\C}{\mathbb{C}}
\renewcommand{\S}{\mathsf{\Sigma}}
\renewcommand{\G}{\mathsf{G}}
\newcommand{\K}{\mathcal{K}}
\newcommand{\A}{\mathcal{A}}
\newcommand{\Z}{\mathsf{Z}}
\renewcommand{\Re}[0]{{\mathrm{Re}\,}}
\renewcommand{\Im}[0]{{\mathrm{Im}\,}}
\def\@mkboth#1#2{}
\newlength\appendixwidth
\preto\appendix{\addtocontents{toc}{\protect\patchl@section}}
\newcommand{\patchl@section}{%
	\settowidth{\appendixwidth}{\textbf{Appendix }}%
	\addtolength{\appendixwidth}{1.5em}%
	\patchcmd{\l@section}{1.5em}{\appendixwidth}{}{\ddt}%
}
\begin{document}

\title{\textbf{The self-energy of Friedrichs-Lee models\\and its application to bound states and resonances}}

\author[$1,2,\star$]{Davide Lonigro}

\affil[$1$]{\small Dipartimento di Fisica and MECENAS, Universit\`a di Bari, I-70126 Bari, Italy}
\affil[$2$]{\small INFN, Sezione di Bari, I-70126 Bari, Italy}
\affil[$\star$]{\small\texttt{davide.lonigro@ba.infn.it}}

\date{}
\maketitle

\begin{abstract}
A system composed of two-level systems interacting with a single excitation of a one-dimensional boson field with continuous spectrum, described by a Friedrichs (or Friedrichs-Lee) model, can exhibit bound states and resonances; the latter can be characterized by computing the so-called self-energy of the model. We evaluate an analytic expression, valid for a large class of dispersion relations and coupling functions, for the self-energy of such models. Afterwards, we focus on the case of identical two-level systems, and we refine our analysis by distinguishing between dominant and suppressed contributions to the associated self-energy; we finally examine the phenomenology of bound states in the presence of a single dominant contribution.
\end{abstract}

\section{Introduction}

The study of the interaction between a structured boson field and a family of two-level systems, or atoms, is an ubiquitous topic in both mathematical and theoretical physics \cite{ingold,breuer,leggett}; the dependence of the spectral properties of the system and, in particular, the insurgence of bound states and resonances and their dependence on the structure of the coupling, may be of primary relevance. 

A simplified, yet largely useful, description of this scenario is given by the Lee model \cite{Lee,Giacosa,Giacosa2}, which was originally introduced as an example of a renormalizable solvable quantum-field model; in the Lee model, the interaction between the atoms and the boson is constructed in such a way that the global excitation number is conserved, i.e., physically, a boson is created in the field if and only if one of the atoms switches from its excited state to the ground one, and vice versa. When the quantum field is assumed to be monochromatic, the Lee model reduces to the well-known Jaynes-Cummings model \cite{Jaynes}, which is of paramount importance in quantum optics and in the description of trapped ions.

More generally, in quantum electrodynamics, a Lee structure emerges when performing a ``rotating-wave approximation": terms that do not conserve the excitation number are neglected. \cite{cohen} The model has been successfully been applied in waveguide quantum electrodynamics to study bound and scattering states that emerge when quantum emitters are coupled with a photon field confined in a quasi-1d waveguide. \cite{bic1,bic2,palma,cicc}

Finally, the presence of a conserved excitation number implies that the Fock space of the theory is naturally decomposed into excitation sectors which evolve independently, and the restriction of the Lee Hamiltonian on each sector is a well-defined self-adjoint operator whose spectral properties can be investigated; when restricting the Lee model to the single-excitation sector, we obtain an operator denoted as a Friedrichs, Lee-Friedrichs, or Friedrichs-Lee model \cite{Friedrichs,Horwitz}. At the mathematical level, such a model can be loosely described as an operator $H=H_0+V$, in which $H_0$ is a free Hamiltonian and $V$ is a perturbation which couples two different parts of its spectrum, typically a discrete and a continuous one; at the level of dynamics, this coupling brings about instability of originally stable states, which can even decay into a common final state, thus yielding mixing between the original ones.

 The model has found applications in different frameworks, such as quantum field theory~\cite{Araki}, non-relativistic quantum electrodynamics~\cite{Frohlich}, quantum optics~\cite{Gadella}, quantum probability~\cite{vonWaldenfels}, and hadron resonance phenomenology \cite{liu,zhou}, to name a few, and has proven to be an useful testbed for fundamental phenomena like the Zeno and anti-Zeno effects \cite{zeno,antizeno}. An extended description of the spectral properties of the single-atom Friedrichs-Lee model was provided in \cite{FLProc,FL}.

As it turns out, the spectral properties of Friedrichs-Lee models (and, in particular, their bound states and resonances) with $n$ atoms are entirely determined by an $n\times n$ matrix-valued function, which will be referred to as the self-energy matrix, strictly dependent on the choice of the coupling between the atoms and the field as well as the structure of the field. When describing the system in the momentum representation, the self-energy is expressed as an integral over the momentum space. In principle, if the self-energy is explicitly known, all bound states and resonances can be evaluated at least numerically. 

This article aims at addressing the general problem of evaluating the self-energy of Friedrichs-Lee models under some regularity assumptions on the structure of the field and the coupling. Namely, we will restrict our attention to the case in which the boson field is one-dimensional, has a continuous spectrum, and its momentum space coincides with the full real line, the concrete example of such a situation being a photon field confined in an infinite linear waveguide. 

We stress that, in our calculations, no specific expression for neither the dispersion relation nor the coupling functions modeling the atom-field interaction shall be assumed, modulo some technical requirements; an effort in this direction is justified, aside from the theoretical interest \textit{per se} of the model, by the enormous developments of quantum technologies in recent years, which pave the way toward engineering quantum systems characterized by diverse values of both parameters.

This article is organized as follows:
\begin{itemize}
	\item in Section~\ref{sec:general} we describe the structure of an $n$-atom Friedrichs-Lee model: we set up the Schr\"odinger equation for the model, introduce the self-energy, and briefly discuss the emergence of bound states and resonances;
	\item in Section~\ref{sec:self} we evaluate an explicit expression for the self-energy (Prop.~\ref{prop:prop}) under some assumptions on the parameters of the model, thoroughly discussed in the section;
	\item in Section~\ref{sec:identical} we particularize our discussion to a class of coupling functions which can be physically associated to identical atoms in an one-dimensional infinite geometry; the bound states emerging in a subclass of such models are described. 
\end{itemize}

\section{Friedrichs-Lee models and their properties}\label{sec:general}
We will start our analysis by recalling the definition of Friedrichs-Lee models, and by discussing their spectral properties; we will show that the latter depend crucially on a matrix-valued function, the self-energy, whose calculation will be the core of the following sections.

\subsection{Preliminaries}
Let $\omega:\mathbb{R}\rightarrow\mathbb{R}$ a continuous real-valued function bounded from below, i.e.  $\min_{k\in\mathbb{R}}\omega(k)>-\infty$; without loss of generality, we will assume $\omega(k)\geq0$. We shall consider a free Hamiltonian whose spectrum has a simple, purely continuous component, coinciding with the range of $\omega$, plus $n$ eigenvalues $\varepsilon_1,\dots,\varepsilon_n$, to be counted with their multiplicity; formally, we can write such an operator via the following expression: 
\begin{equation}\label{eq:h0}
	H_0=\sum_{\ell=1}^{n}\varepsilon_\ell\ket{\ell}\!\bra{\ell}+\int_{\R}\omega(k)\ket{k}\!\bra{k}\,\mathrm{d}k,
\end{equation}
with the following constraints holding for all $j,\ell=1,\dots,n$ and $k,k'\in\mathbb{R}$:
\begin{equation}\label{eq:on}
	\braket{j|\ell}=\delta_{j\ell},\quad\braket{k|k'}=\delta(k-k'),\quad\braket{j|k}=0.
\end{equation}
A rigorous implementation of the formal equations above can be obtained by means of rigged Hilbert spaces \cite{Madrid,Bohm,BohmGadella,MadridBohm}; see \cite{GadellaGomez} for a review of the mathematical foundations of this notation, which also includes alternative approaches. Physically, as discussed more extensively in the introduction, here $\varepsilon_1,\dots,\varepsilon_n$ represent the excitation energies of the atoms, and $\omega(k)$ the dispersion relation of the boson field.

An interaction between the two parts of the spectrum of $H_0$ (physically, between the atoms and the field) will be implemented via the following operator:\footnote{Note that, here and in the following, $\overline{z}$ is the complex conjugate of the complex number $z$.}
\begin{equation}\label{eq:v}
	V=\sum_{\ell=1}^n\int_{\R}\Bigl(\overline{F_\ell(k)}\ket{k}\!\bra{\ell}+F_\ell(k)\ket{\ell}\!\bra{k}\Bigr)\,\mathrm{d}k,
\end{equation}
with $\{F_\ell\}_{\ell=1,\dots,n}$ being a set of continuous functions which modulate the coupling between the $\ell$th eigenvalue (physically, the $\ell$th atom) and the boson field. We will call them the \textit{form factors} of the model. For future purposes, we will require the following constraint to hold:
\begin{equation}\label{eq:normalization}
	\int_{\R}\frac{|F_j(k)|^2}{\omega(k)+1}\,\mathrm{d}k<\infty,\qquad j=1,\dots,n.
\end{equation}
Since the integrand has no singularities for finite $k$ (due to the request $\omega(k)\geq0$), Eq.~\eqref{eq:normalization} means that $|F_j(k)|^2$ cannot ``grow too fast'' with respect to the dispersion relation as $|k|\to\infty$.

\subsection{Schr\"odinger equation for Friedrichs-Lee models}
By Eq.~\eqref{eq:on}, the most generic state $\ket{\Psi}$ of the model can be expanded as follows:
\begin{equation}\label{eq:most}
\ket{\Psi}=\sum_{\ell=1}^na_\ell\ket{\ell}+\int_{\R}\xi(k)\ket{k}\:\mathrm{d}k,
\end{equation}
with $a_1,\dots,a_n\in\mathbb{C}$, and $\xi(k)$ being a square-integrable function,
\begin{equation}
	\int_{\mathbb{R}}|\xi(k)|^2\;\mathrm{d}k<\infty,
\end{equation}
with the normalization constraint
\begin{equation}
	\sum_{\ell=1}^n|a_\ell|^2+\int_{\mathbb{R}}|\xi(k)|^2\;\mathrm{d}k=1.
\end{equation}
The Schr\"odinger equation for a state as in Eq.~\eqref{eq:most}, namely,
\begin{equation}
\i\frac{\mathrm{d}}{\mathrm{d}t}\ket{\Psi(t)}=H\ket{\Psi(t)},
\end{equation}
is therefore equivalent to the coupled differential system
\begin{equation}
\begin{dcases}
\i\,\dot{a}_j(t)=\varepsilon_j\,a_j(t)+\int_{\R}\xi(k,t)F_j(k)\,\mathrm{d}k;\\
\i\,\dot{\xi}(k,t)=\omega(k)\xi(k,t)+\sum_{\ell=1}^n \overline{F_\ell(k)} a_\ell(t),
\end{dcases}
\end{equation}
the first equation holding for all $j=1,\dots,n$. It will be convenient to rewrite the system above as follows:\footnote{Hereafter, the adjoint of a column vector $\bm{a}$, i.e. the row vector whose entries are the complex conjugates of those of $\bm{a}$, will be written as $\bm{a}^\dag$; for two $n$-component column vectors $\bm{a},\bm{b}$, the scalar $\bm{a}^\dag\bm{b}$ and the $n\times n$ matrix $\bm{a}\bm{b}^\dag$ are respectively their inner and outer product.}
\begin{equation}\label{eq:convenient}
\begin{dcases}
\i\,\dot{\bm{a}}(t)=\mathcal{E}\,\bm{a}(t)+\int_{\R}\xi(k,t)\bm{F}(k)\,\mathrm{d}k;\\
\i\,\dot{\xi}(k,t)=\omega(k)\xi(k,t)+\bm{F}(k)^\dag\bm{a}(t),
\end{dcases}
\end{equation}
where we are introducing the $n\times n$ diagonal matrix $\mathcal{E}=\mathrm{diag}\left(\varepsilon_1,\dots,\varepsilon_n\right)$ and the column vectors 
\begin{eqnarray}
\bm{a}(t)=\begin{pmatrix}
a_1(t)\\a_2(t)\\\vdots\\ a_n(t)
\end{pmatrix},\qquad\bm{F}(k)=\begin{pmatrix}
F_1(k)\\F_2(k)\\\vdots\\ F_n(k)
\end{pmatrix}.
\end{eqnarray}
This system can be conveniently solved in the complex energy domain, that is, by taking the \textit{Fourier-Laplace transform} of both equations with respect to $t$:
\begin{eqnarray}\label{eq:fl1}
\hat{\bm{a}}(z)&=&\i\int_0^\infty \bm{a}(t)\,\e^{\i zt}\,\mathrm{d}t,\\\label{eq:fl2}
\hat{\xi}(k,z)&=&\i\int_0^\infty \xi(k,t)\,\e^{\i zt}\,\mathrm{d}t,
\end{eqnarray}
with $z$ being a complex variable, $\Im z>0$, that can be interpreted as a complex energy. The Fourier-Laplace transformed system reads
\begin{equation}
\begin{dcases}
z\,\hat{\bm{a}}(z)+\bm{a}_0=\mathcal{E}\,\hat{\bm{a}}(z)+\int_{\R}\hat{\xi}(k,z)\bm{F}(k)\;\mathrm{d}k;\\
z\,\hat{\xi}(k,z)+\xi_0(k)=\omega(k)\hat{\xi}(k,z)+\bm{F}(k)^\dag\hat{\bm{a}}(z),
\end{dcases}
\end{equation}
where we have set $\bm{a}(0)\equiv\bm{a}_0$, $\xi(k,0)\equiv\xi_0(k)$. The second equation yields
\begin{equation}\label{eq:chiz}
\hat{\xi}(k,z)=\frac{\xi_0(k)}{\omega(k)-z}-\frac{\bm{F}(k)^\dag\hat{\bm{a}}(z)}{\omega(k)-z};
\end{equation}
by substituting into the first one, we get
\begin{equation}\label{eq:eqaz}
\bigl(\mathcal{E}-z\,\mathsf{Id}-\mathsf{\Sigma}(z)\bigr)\hat{\bm{a}}(z)=\bm{a}_0-\int_{\R}\frac{\xi_0(k)}{\omega(k)-z}\bm{F}(k)\,\mathrm{d}k,
\end{equation}
where $\mathsf{Id}$ is the $n$-dimensional identity matrix, and we define the \textit{self-energy matrix} via
\begin{equation}
\mathsf{\Sigma}(z)=\int_{\R}\frac{\bm{F}(k)\bm{F}(k)^\dagger}{\omega(k)-z}\,\mathrm{d}k,
\end{equation}
that is, $\left[\mathsf{\Sigma(z)}\right]_{j\ell}=\Sigma_{j\ell}(z)$, with
\begin{equation}
\Sigma_{j\ell}(z)=\int_{\R}\frac{F_j(k)\overline{F_\ell(k)}}{\omega(k)-z}\,\mathrm{d}k,
\end{equation}
this matrix being well-defined, as it may be readily shown, if the constraint in Eq.~\eqref{eq:normalization} holds.

A simple calculation shows that, whenever $\Im z>0$, the anti-hermitian part $\Im\Sigma(z)$ of the self-energy is a positive semidefinite matrix, i.e. it is a matrix-valued Nevanlinna-Herglotz function (see e.g. \cite{Gesztesy2} for a survey of the properties of such objects); consequently, the anti-hermitian part of the matrix $-\mathcal{E}+z\mathsf{Id}+\S(z)$ is positive definite, and thus has a strictly nonzero determinant. Since a matrix with positive definite (anti--)hermitian part is nonsingular \cite{Mathias}, Eq.~\eqref{eq:eqaz} admits a unique solution given by
\begin{equation}\label{eq:az}
\hat{\bm{a}}(z)=\bigl[\mathcal{E}-z\,\mathsf{Id}-\mathsf{\Sigma}(z)\bigr]^{-1}\left[\bm{a}_0-\int_{\R}\frac{\xi_0(k)}{\omega(k)-z}\bm{F}(k)\,\mathrm{d}k\right].
\end{equation}
Substituting Eq.~\eqref{eq:az} into~\eqref{eq:chiz} finally yields $\hat{\xi}(k,z)$ and thus the full solution of the Schr\"odinger equation in the complex energy domain. The solution in the time domain can be finally reconstructed by reversing Eqs.~\eqref{eq:fl1}--\eqref{eq:fl2}, that is,
\begin{eqnarray}\label{eq:sol}
\bm{a}(t)&=&\frac{1}{2\pi\i}\int_{\i\delta-\infty}^{\i\delta+\infty}\hat{\bm{a}}(z)\,\e^{-\i zt}\,\mathrm{d}z;\\\label{eq:sol2}
\xi(k,t)&=&\frac{1}{2\pi\i}\int_{\i\delta-\infty}^{\i\delta+\infty}\hat{\xi}(k,z)\,\e^{-\i zt}\,\mathrm{d}z,
\end{eqnarray}
with $\delta>0$ being arbitrary. Many quantities of physical interest can be computed from these expressions: in particular, the survival amplitude of any initial state characterized by $\xi_0(k)\equiv0$ (i.e. a state in which the excitation is entirely shared by the atoms), corresponding to $\mathcal{A}(t)=\bm{a}_0^\dagger\bm{a}(t)$, reads
\begin{equation}\label{eq:at}
\mathcal{A}(t)=\frac{1}{2\pi\i}\int_{\i\delta-\infty}^{\i\delta+\infty}\bm{a}_0^\dag\,\bigl[\mathcal{E}-z\,\mathsf{Id}-\mathsf{\Sigma}(z)\bigr]^{-1}\bm{a}_0\,\e^{-\i zt}\;\mathrm{d}z.
\end{equation}
This discussion shows the primary role of the self-energy matrix in determining the properties of a Friedrichs-Lee model: in principle, by evaluating $\mathsf{\Sigma}(z)$, we can reconstruct the evolution of an arbitrary state of the model. 

\subsection{Bound states and resonances}
Let us have a closer look at Eq.~\eqref{eq:sol}: it consists of the integral of a function in the upper complex half-plane $\mathbb{C}^+$, given by 
\begin{equation}
	z\in\mathbb{C}^+\mapsto \bm{a}_0^\dag\,\bigl[\mathcal{E}-z\,\mathsf{Id}-\mathsf{\Sigma}(z)\bigr]^{-1}\bm{a}_0\,\e^{-\i zt}\in\mathbb{C},
\end{equation}
on a straight path above the real line. The latter function, for every choice of the initial atom configuration $\bm{a}_0$, is analytic since $z\mapsto\S(z)$ is itself (entrywise) analytic and, as discussed above, the matrix $-\mathcal{E}+z\,\mathsf{Id}+\S(z)$ is nonsingular. We may be tempted to evaluate the integral in Eq.~\eqref{eq:sol} by closing the integration contour on a semicircle in $\mathbb{C}^+$, but the presence of the term $\e^{-\i zt}$, whose modulus grows exponentially as $|z|\to\infty$, prohibits such an approach.

However, let us suppose that the self-energy admits a \textit{meromorphic continuation} $\S^{(\mathrm{II})}(z)$ from $\mathbb{C}^+$ to a subset $S\subset\mathbb{C}^-$ of the lower half-plane $\mathbb{C}^-$ connected with the upper one. In this case, the function
\begin{equation}
z\in\mathbb{C}^+\cup S\mapsto \bm{a}_0^\dag\,\bigl[\mathcal{E}-z\,\mathsf{Id}-\mathsf{\Sigma}^{(\mathrm{II})}(z)\bigr]^{-1}\bm{a}_0\,\e^{-\i zt}\in\mathbb{C}
\end{equation}
is itself a meromorphic continuation of the function above and, in particular, may have \textit{poles} in $S$, given by the solutions of the equation
\begin{equation}\label{eq:resonances}
\det\left(\mathcal{E}-z\,\mathsf{Id}-\S^{(\mathrm{II})}(z)\right)=0.
\end{equation}
If Eq.~\eqref{eq:resonances} admits a family of solutions $\hat{z}_1,\dots,\hat{z}_r$ in $S$, then, via the residue theorem, the function $\mathcal{A}(t)$ can be decomposed as such:
\begin{equation}
	\mathcal{A}(t)=\sum_{s=1}^r W_s\,\e^{-\i\hat{z}_st}+\mathcal{A}_{\mathrm{contour}}(t)
\end{equation}
for some $W_1,\dots,W_r\in\mathbb{C}$, with $\mathcal{A}_{\mathrm{contour}}(t)$ being a contour contribution. Clearly, each pole $\hat{z}_s$ is associated with an exponential contribution to $\mathcal{A}(t)$ whose half-life coincides with $\Im\,\hat{z}_s$, corresponding to a \textit{resonance} of the model. 

In particular, \textit{real-valued} resonances yield oscillatory contributions to $\mathcal{A}(t)$, and thus correspond to bound states. They are characterized by
\begin{equation}\label{eq:poles}
\det\bigl(\mathcal{E}-E\,\mathsf{Id}-\S(E+\i0)\bigr)=0,
\end{equation}
with $\S(E+\i0)=\lim_{\delta\downarrow0}\S(E+\i\delta)$; the latter equation may be equivalently found, in a more direct way, by solving the eigenvalue equation for the Friedrichs-Lee model,
\begin{equation}
	H\ket{\Psi}=E\ket{\Psi},\qquad E\in\mathbb{R},
\end{equation}
with a similar strategy as the one applied in the previous paragraph for the Schr\"odinger equation. In particular, when expanding $\ket{\Psi}$ as in Eq.~\eqref{eq:most}, the column vector $\bm{a}$ corresponding to a bound state of the model with energy $E$ (physically, the excitation profile of the atoms) must satisfy
\begin{equation}
	\bigl(\mathcal{E}-E\,\mathsf{Id}-\S(E+\i0)\bigr)\bm{a}=\bm{0},
\end{equation}
the latter equation admitting nonzero solutions because of Eq.~\eqref{eq:poles}.

The resulting picture can be summarized as follows. In the absence of coupling, the model admits $n$ eigenvalues $\varepsilon_1,\dots,\varepsilon_n$ (to be counted with their multiplicity), each of them corresponding to a bound state. When switching on the coupling, those eigenvalues will generally convert into complex resonances (pictorially, they will ``sink into the complex plane"), corresponding to unstable states of the system: these are the solutions of Eq.~\eqref{eq:resonances}. Still, it \textit{may} happen that, for specific values of the parameters of the model, some of these resonances are real (they ``emerge"), thus corresponding again to genuinely bound states; precisely, bound states in the continuum (BIC). This happens at the solutions of Eq.~\eqref{eq:poles}, which is thus a particular case of Eq.~\eqref{eq:resonances}: the latter yields \textit{all} the possible resonances of the system, while the former \textit{only} yields real-valued resonances, if any.

It is therefore clear that evaluating the self-energy $\mathsf{\Sigma}(z)$ is crucial to determine whether the system admits bound states and resonances, what is their energy, and what are their features. The following section will be devoted to this goal.

\section{An explicit expression for the self-energy}\label{sec:self}

We shall now evaluate an expression for the self-energy $\mathsf{\Sigma}(z)$ of a Friedrichs-Lee model, and in particular its boundary values $\mathsf{\Sigma}(E+\i0)$, under some minimal technical assumptions; we will be able to decompose $\mathsf{\Sigma}(z)$ as the sum of \textit{pole} and \textit{contour} terms. This analysis will be refined in Section~\ref{sec:identical}, where we will specialize ourselves to a particular class of Friedrichs-Lee models describing identical atoms coupled with a boson field. 

\subsection{Well-definiteness and symmetry hypotheses}
Let us first recall the basic assumptions on $\omega(k)$ and $\bm{F}(k)$ from which we started.
\begin{hypothesis}\label{hyp1}
$\omega(k)$ is a continuous real-valued function, with $\min_{k\in\mathbb{R}}\omega(k)\geq0$, and Eq.~\ref{eq:normalization} holds.
\end{hypothesis}
Notice that, while we are merely requiring continuity at this stage, we will ultimately tighten our regularity assumptions on it. Secondly, we will make some assumptions about the \textit{symmetry properties} of the parameters.
\begin{hypothesis}\label{hyp2}
	\begin{equation}
	\omega(-k)=\omega(k)\quad\text{and}\quad F_j(-k)=\overline{F_j(k)},\;j=1,\dots,n.
	\label{eq:sim2}
	\end{equation}
\end{hypothesis}
Physically, Eq.~\eqref{eq:sim2} encodes the invariance of the model under reflections of the momentum. While not strictly required for our approach to be applicable, this assumption will greatly simplify our exposition; besides, they are indeed satisfied by the class of models that we will analyze in Section~\ref{sec:identical}, as shown in the example below.
\begin{example}\label{ex1}
	Let $x_1,x_2,\dots,x_n\in\R$, and consider a family of form factors defined as such:
	\begin{equation}
	F_j:\R\mapsto\C,\qquad F_j(k)=F(k)\,\e^{\i k x_j},
	\end{equation}
	with $F(k)$ being some function such that $F(-k)=F(k)$; as we will see in Section~\ref{sec:identical}, this choice corresponds to the case in which the atoms are identical. Clearly, $F_j(-k)=\overline{F_j(k)}$.
\end{example}
From now on, it will be useful to rewrite the self-energy as
\begin{equation}\label{eq:self}
	\mathsf{\Sigma}(z)=\int_{-\infty}^{\infty}\frac{\mathsf{G}(k)}{\omega(k)-z}\,\mathrm{d}k,
\end{equation}
where we have introduced the matrix-valued function defined by
\begin{equation}\label{eq:g}
	\mathsf{G}(k)=\bm{F}(k)\bm{F}(k)^\dag,\quad G_{j\ell}(k)=F_j(k)\overline{F_\ell(k)},
\end{equation}
which, by definition, satisfies the following properties:
\begin{equation}\label{eq:hypg}
\mathsf{G}(k)^\dag=\mathsf{G}(k),\quad\mathsf{G}(-k)=\mathsf{G}(k)^\intercal,
\end{equation}
with $\G(k)^\intercal$ being its transpose: the first property holds as an immediate consequence of its definition (Eq.~\eqref{eq:g}), while the second one follows from Hypothesis~\ref{hyp2}.

Under Hypotheses~\ref{hyp1}--\ref{hyp2}, the self-energy matrix~\eqref{eq:self} satisfies the following properties:\footnote{$\mathsf{A}^\dag$ and $\mathsf{A}^\intercal$ are the adjoint and the transpose of the $n\times n$ matrix $\mathsf{A}$.}
\begin{equation}
\S(z)^\dag=\S(\overline{z})
\label{eq:sim3}
\end{equation}
and 
\begin{equation}
\S(z)^\intercal=\S(z).
\label{eq:sim4}
\end{equation}
Eq.~\eqref{eq:sim3} is an immediate consequence of $\overline{\omega(k)}=\omega(k)$ and the first property in Eq.~\eqref{eq:hypg}:
\begin{equation}
\S(z)^\dag=\int_{-\infty}^\infty\frac{\G(k)^\dag}{\overline{\omega(k)}-\overline{z}}\,\mathrm{d}k=\int_{-\infty}^\infty\frac{\G(z)}{\omega(k)-\overline{z}}\,\mathrm{d}k=\S(\overline{z}),
\end{equation}
while Eq.~\eqref{eq:sim4} follows from $\omega(-k)=\omega(k)$ and the second property in Eq.~\eqref{eq:hypg}:
\begin{eqnarray}
\S(z)^\intercal&=&\int_{-\infty}^\infty\frac{\G(k)^\intercal}{\omega(k)-z}\,\mathrm{d}k=\int_{-\infty}^\infty\frac{\G(-k)}{\omega(k)-z}\,\mathrm{d}k\nonumber\\&=&\int_{-\infty}^\infty\frac{\G(-k)}{\omega(-k)-z}\,\mathrm{d}k=\int_{-\infty}^\infty\frac{\G(k)}{\omega(k)-z}\,\mathrm{d}k\nonumber=\S(z).
\end{eqnarray}
In particular, as a consequence of Eq.~\eqref{eq:sim4}, we only need to compute $\Sigma_{j\ell}(z)$ with (for example) $j\geq\ell$.

\subsection{Analytic continuation of the parameters}
Our approach to compute the self-energy will crucially involve the transformation of integrals over the real variable $k$ into contour integrals for functions of a complex variable $\kappa$. For this reason, we will now require the parameters to admit analytic extensions in a subset of the complex plane:
\begin{hypothesis}\label{hyp3}
	There exists an open connected subset $\K\subset\C$ of the complex plane, with $\R\subset\K$ and having smooth boundary $\Gamma=\delta \K$, and two extensions in $\overline{\K}$ of the functions $\omega(k)$ and $\G(k)$:
	\begin{equation}
		\kappa\in\overline{\K}\mapsto\omega(\kappa)\in\C,\qquad\kappa\in\K\mapsto \G(\kappa)\in\mathcal{M}_n(\C),
	\end{equation}
	which are analytic in $\K$ and continuous on $\overline{\K}$.\footnote{Here $\overline{\mathcal{K}}$ is the closure of the set $\mathcal{K}$; the reader should not confuse this symbol with the one for complex conjugation. No ambiguities will arise from this choice. Finally, $\mathcal{M}_n(\mathbb{C})$ is the space of $n\times n$ matrices with complex entries.}
\end{hypothesis}
These are quite natural requests. Any polynomial function is naturally extended to an entire function and thus satisfies this assumption with $\mathcal{K}=\mathbb{C}$. More generally, an \textit{algebraic function} (e.g. a rational function or a fractional power of a polynomial) can be extended to a complex function which may have either isolated singularities in the complex plane and/or discontinuities in correspondence to curves in the complex plane that connect its \textit{branch points}, i.e. branch cuts. In this case, $\mathcal{K}$ will be taken as the complex plane minus a neighborhood of both the isolated singularities and the branch cuts. A concrete case will be discussed in Subsection~\ref{subsec:wqed}.
	
As a first consequence, both functions $k\in\R\mapsto\omega(k)\in\R$ and $k\in\R\mapsto \G(k)\in\mathcal{M}_n(\C)$ admit derivatives of all orders, each satisfying
	\begin{equation}
	\omega^{(\nu)}(-k)=(-1)^\nu\omega^{(\nu)}(k),\qquad \G^{(\nu)}(-k)=(-1)^\nu \G(k)^\intercal
	\label{eq:sim2bis}
	\end{equation}
	for all $k\in\R$ and $\nu\in\mathbb{N}$; therefore, by Hypothesis~\ref{hyp3},
	\begin{enumerate}
		\item[(i)] for all $\kappa$ such that $\kappa,\overline{\kappa},-\kappa,-\overline{\kappa}\in\K$, the functions $\omega(\kappa)$ and $\G(\kappa)$ satisfy
		\begin{equation}
		\overline{\omega(\kappa)}=\omega(\overline{\kappa}),\qquad \G(\kappa)^\dag=\G(\overline{\kappa})\label{eq:sim5}
		\end{equation}
		and
		\begin{equation}
		\omega(-\kappa)=\omega(\kappa),\qquad \G(-\kappa)=\G(\kappa)^\intercal.\label{eq:sim6}
		\end{equation}
		\item[(ii)] $\K$ can be always extended as a symmetric region with respect to both axes, in which~\eqref{eq:sim5} and~\eqref{eq:sim6} hold for all $\kappa$.
		\end{enumerate}
Indeed, given $\kappa=k+\i\eta\in \K$, via a series expansion and by applying~\eqref{eq:sim2bis} we obtain
	\begin{equation}
	\G(-k-\i\eta)=\sum_{\nu\in\mathbb{N}}\frac{1}{\nu!}\G^{(\nu)}(-k)(-\i\eta)^\nu=\sum_{\nu\in\mathbb{N}}\frac{1}{\nu!}\G^{(\nu)}(k)^\intercal(\i\eta)^\nu=\G(k+\i\eta)^\intercal;
	\end{equation}
	\begin{equation}
	\G(k+\i\eta)^\dag=\sum_{\nu\in\mathbb{N}}\frac{1}{\nu!}\G^{(\nu)}(k)^\dag(-\i\eta)^\nu=\sum_{\nu\in\mathbb{N}}\frac{1}{\nu!}\G^{(\nu)}(k)(-\i\eta)^\nu=\G(k-\i\eta),
	\end{equation}
	the argument for $\omega$ being analogous. $(ii)$ is an immediate consequence of $(i)$.
	
As we will see, the behavior of the solutions of the equation $\omega(\kappa)=z$ will crucially affect the properties of the self-energy. By the implicit function theorem, we know that, given any $\kappa_0\in\K$ such that $\omega'(\kappa_0)\neq0$, $\omega(\kappa)$ is \textit{locally invertible} near $\omega(\kappa_0)$, i.e. there is a neighborhood $S$ of $\omega(\kappa_0)$ and a function $\hat{\kappa}_{0}:S\mapsto\C$ such that 
\begin{equation}
\omega(\hat{\kappa_0}(z))=z\qquad\forall z\in S.
\end{equation}
Points with zero derivative and their corresponding values of $\omega(\kappa)$ are referred to as \textit{critical points} and \textit{critical values}, since $\omega(\kappa)$ is not locally invertible near them.

We will finally state here some simple properties of the solutions. The solutions $\kappa_0$ of the equation $\omega(\kappa)=z$ satisfy the following properties for all $z$:
	\begin{itemize}
		\item $\omega(\kappa_0)=z$ iff $\omega(-\kappa_0)=z$;
		\item $\omega(\kappa_0)=z$ iff $\omega(\overline{\kappa_0})=\overline{z}$;
		\item in particular, for all $E\in\R$,
		\begin{equation}\label{eq:quartet}
		\omega(\kappa_0)=E\text{  iff  }\omega(-\kappa_0)=E\text{  iff  }\omega(\overline{\kappa_0})=E\text{  iff  }\omega(-\overline{\kappa_0})=E.
		\end{equation}
	\end{itemize}
Some other generic considerations about this topic can be found in Appendix~\ref{app:solutions}.

\subsection{Behavior of the parameters at the boundary}
Finally, we will need a technical assumption about the behavior of $\G(\kappa)$ for large $|\kappa|$ and at the boundary $\Gamma$ of $\mathcal{K}$. Loosely speaking, we will require the matrix elements of $\G(\kappa)$ with $j\geq\ell$ not to grow ``too quickly'' neither for large values of $|\kappa|$ nor at the boundary of the region $\K$.

Here and in the following we will use the following notation: for any subset $S$ of the complex plane, we will write
\begin{equation}
	S^\pm=S\cap\C^\pm,
\end{equation}
where $\C^\pm=\left\{z\in\C:\,\pm\Im z>0\right\}$.
\begin{hypothesis}\label{hyp4}
For every $j\geq\ell$, there is $z_0\in\C$ such that the following conditions hold:
\begin{equation}
\lim_{R\to\infty}\,\max_{\kappa\in \K^+,\,|\kappa|=R}R\,\frac{|G_{j\ell}(\kappa)|}{|\omega(\kappa)-z_0|}=0,
\end{equation}
\begin{equation}
\left|\int_{\Gamma^+}\frac{G_{j\ell}(\kappa)}{\omega(\kappa)-z_0}\,\mathrm{d}\kappa\right|<\infty.
\label{eq:intbound}
\end{equation}
\end{hypothesis}
\begin{example}\label{ex3}
	Let us again consider the case $F_{j}(k)=F(k)\,\e^{\i kx_j}$ for some $x_1,\dots,x_n\in\R$, and thus 
	\begin{equation}
		G_{j\ell}(k)=G(k)\,\e^{\i k(x_j-x_\ell)},\qquad G(k)=|F(k)|^2.
	\end{equation}
	Without loss of generality, let us fix $x_1\leq x_2\leq\dots\leq x_n$ and let us only consider terms with $j\geq\ell$, so that we can equivalently write $G_{j\ell}(k)=G(k)\,\e^{\i k|x_j-x_\ell|}$. Suppose that the function $k\in\R\mapsto G(k)\in\R$ admits an analytic extension in some open connected $\mathcal{K}\subset\C$, satisfying
	\begin{equation}
	\lim_{R\to\infty}\,\max_{\kappa\in \K^+,\,|\kappa|=R}R\,\frac{|G(\kappa)|}{|\omega(\kappa)-z_0|}=0,\qquad
	\int_{\Gamma^+}\frac{|G(\kappa)|}{|\omega(\kappa)-z_0|}\,\mathrm{d}\kappa<\infty;
	\label{eq:scalar}
	\end{equation}
	then the function
	\begin{equation}
		\kappa\in\K\mapsto\G(\kappa)\in\mathcal{M}_n(\C),\qquad G_{j\ell}(\kappa)=G(\kappa)\,\e^{\i\kappa|x_j-x_\ell|}
	\end{equation}
	is an analytic extension of $k\in\R\mapsto\G(k)\in\mathcal{M}_n(\C)$. Besides, for every $\kappa\in\mathcal{K}$,
	\begin{equation}
|G_{j\ell}(\kappa)|=|G(\kappa)|\,\e^{-|x_j-x_\ell|\Im\kappa};
	\end{equation}
	since $x_j\geq x_\ell$ for every $j\geq\ell$, we have $|G_{j\ell}(\kappa)|\leq|G(\kappa)|$ and hence $\mathsf{G}(\kappa)$ satisfies the required hypotheses. We refer again to Subsection~\ref{subsec:wqed} for a concrete case in which such assumptions are fulfilled.
	 
\end{example}

\subsection{Decomposition of the self-energy}
We can now find an expression for the self-energy matrix.
\begin{proposition}\label{prop:prop}
	Consider a Friedrichs-Lee model such that Hypotheses~\ref{hyp1}--\ref{hyp4} are satisfied, and let $\mathcal{A}\subset\C$ be an open connected subset of $\C$ with $\A\cap\omega(\Gamma)=\emptyset$ and such that, for all $z\in\mathcal{A}$, the equation
	\begin{equation}
		\omega(\kappa)=z
	\end{equation}
	admits $2r$ continuous solutions $\pm\hat{\kappa}_1(z),\dots,\pm\hat{\kappa}_r(z)\in\K$, labeled in such a way that $\hat{\kappa}_s(z)\in\K^+$ for $z\in\A^+$. Then:
	
	$(i)$ for all $z\in\A^+\setminus\omega(\mathcal{C})$, where $\mathcal{C}=\{\kappa\in\K:\omega'(\kappa)=0\}$, the self-energy matrix $\S(z)$ can be written as such:
		\begin{equation}
		\label{eq:selfinf}
			\S(z)=\mathsf{b}(z)+2\pi\i\sum_{s=1}^r\mathsf{Z}(\hat{\kappa}_s(z)),\qquad\S(\overline{z})=\S(z)^\dag,
		\end{equation}
		where, for every $j\geq\ell$,
		\begin{equation}
b_{j\ell}(z)=\int_{\Gamma^+}\frac{G_{j\ell}(\kappa)}{\omega(\kappa)-z}\,\mathrm{d}\kappa,\qquad Z_{j\ell}(\kappa)=\frac{G_{j\ell}(\kappa)}{\omega'(\kappa)},
\label{eq:bjl}
		\end{equation}
		and $b_{\ell j}(z)=b_{j\ell}(z)$, $Z_{\ell j}(\kappa)=Z_{j\ell}(\kappa)$.
	
$(ii)$ Let $\A\cap\R\neq\emptyset$ and take $E\in\A\cap\R$. Defining
	\begin{eqnarray}
		I^+(E)&=&\left\{s=1,\dots,r:\,\Im\hat{\kappa}_s(E)>0\right\},\\I^0(E)&=&\left\{s=1,\dots,r:\,\Im\hat{\kappa}_s(E)=0\right\},
	\end{eqnarray}
	then the hermitian and anti-hermitian components of $\S(E\pm\i0)$ are
		\begin{equation}
		\Re\S(E\pm\i0)=\mathsf{b}(E)+2\pi\i\left(\sum_{s\in I^+(E)}\Z(\hat{\kappa}_s(E))+\sum_{s\in I^0(E)}\Re\Z(\hat{\kappa}_s(E))\right);
		\end{equation}
		\begin{equation}
		\Im\S(E\pm\i0)=\pm2\pi\i\sum_{s\in I^0(E)}\Im\Z(\hat{\kappa}_s(E));
		\end{equation}
		and the discontinuity of the self-energy at the real axis reads
		\begin{equation}
			\S(E+\i0)-\S(E-\i0)=2\pi\i\sum_{s\in I^0(E)}\left(\Z(\hat{\kappa}_s(E))-\Z(\hat{\kappa}_s(E))^\dag\right).
		\end{equation}
		
		$(iii)$ The self-energy can be analytically continued from $\mathcal{A}^+$ to $\mathcal{A}^-$, and its analytical continuation reads, for all $z\in\A^-\setminus\omega(\mathcal{C})$,
		\begin{equation}
			\S^{(\mathrm{II})}(z)=\mathsf{b}(z)+2\pi\i\sum_{s=1}^r\Z\left(\hat{\kappa}_s(z)\right).
		\end{equation}
\end{proposition}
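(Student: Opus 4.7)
The plan is to evaluate $\S(z)$ for $z\in\A^+$ by contour integration: transform the defining integral along $\R$ into a closed contour in $\K^+$, enclose the poles of $\G(\kappa)/(\omega(\kappa)-z)$, and read off $\mathsf{b}$ and $\Z$ from the boundary contribution and the residues, respectively. Concretely, for $z\in\A^+\setminus\omega(\mathcal{C})$ and fixed indices $j\geq\ell$, I would introduce a closed contour $\gamma_R\subset\overline{\K^+}$, oriented counterclockwise and built from (a) the segment $[-R,R]$ of the real axis, (b) an arc at radius $R$ lying in $\K^+$, and (c) the portion of $\Gamma^+$ needed to close the loop back to $-R$. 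The integrand $G_{j\ell}(\kappa)/(\omega(\kappa)-z)$ is meromorphic inside $\gamma_R$: simplicity of each zero of $\omega(\kappa)-z$ is guaranteed by $z\notin\omega(\mathcal{C})$, so the only singularities are the $r$ simple poles at the roots $\hat{\kappa}_s(z)\in\K^+$, with residues $G_{j\ell}(\hat{\kappa}_s(z))/\omega'(\hat{\kappa}_s(z))=Z_{j\ell}(\hat{\kappa}_s(z))$. Passing to $R\to\infty$, the arc contribution vanishes by Hypothesis~\ref{hyp4}; strictly speaking, the hypothesis only supplies the bound at $z_0$, but the comparison $|\omega(\kappa)-z|\sim|\omega(\kappa)-z_0|$ as $|\kappa|\to\infty$ (necessarily $|\omega(\kappa)|\to\infty$, otherwise the $z_0$-estimate itself would fail) extends the vanishing uniformly to all $z\in\A$. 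What remains delivers Eq.~\eqref{eq:selfinf} for $j\geq\ell$ after matching orientations; the entries with $j<\ell$ and the symmetries $b_{\ell j}=b_{j\ell}$, $Z_{\ell j}=Z_{j\ell}$ follow from the transpose identity~\eqref{eq:sim4}, and $\S(\overline z)=\S(z)^\dag$ is~\eqref{eq:sim3}.

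For part $(ii)$, I would take $z=E\pm\i 0$ in~\eqref{eq:selfinf}. Continuity of $\mathsf{b}$ at $E$ is ensured by $E\notin\omega(\Gamma)$, and continuity of each $\hat{\kappa}_s(z)$ partitions the roots into those remaining in $\K^+$ (set $I^+(E)$) and those reaching $\R$ (set $I^0(E)$). The hermitian/anti-hermitian decomposition is then obtained by averaging $\S(E+\i 0)$ against $\S(E-\i 0)=\S(E+\i 0)^\dag$, a consequence of~\eqref{eq:sim3}; two algebraic inputs do the rest. First, the involution $\kappa\mapsto-\overline\kappa$ permutes the roots in $I^+(E)$ (by~\eqref{eq:quartet}). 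Second, the direct consequence $\Z(-\overline\kappa)=-\overline{\Z(\kappa)}$ of~\eqref{eq:sim5}--\eqref{eq:sim6} makes the pairs in $I^+(E)$ collapse into the stated form, while the $I^0(E)$ terms (where $\Z(\hat{\kappa}_s(E))$ is built from a hermitian $\G$ and a real $\omega'$) split cleanly into a hermitian part, independent of the side of approach, and an anti-hermitian part carrying the $\pm$ sign; the jump formula then reads off immediately. For part $(iii)$, I would observe that the right-hand side of~\eqref{eq:selfinf} is itself meromorphic on the enlarged domain $\A\setminus\omega(\mathcal{C})$, since the branches $\hat{\kappa}_s(z)$ extend analytically across $\R$ (away from critical values they do not leave $\K^+$); this expression therefore furnishes the desired meromorphic continuation $\S^{(\mathrm{II})}$.

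The main obstacle is the passage to the real-axis limit in part $(ii)$: the roots $\hat{\kappa}_s(E)$ with $s\in I^0(E)$ sit on the original contour of integration, so one must certify that each still contributes a \emph{full} residue (not a half) in $\S(E+\i 0)$, with the correct sign carried into $\S(E-\i 0)$. The justification is that such roots are approached by $\hat{\kappa}_s(z)$ strictly from within $\K^+$ as $z\downarrow E$, so the residue theorem applies to the perturbed contour and the limiting identity is inherited by continuity. A secondary technicality is tracking the orientation of $\Gamma^+$ when closing the contour in $\K^+$ and, in the degenerate case $\K=\C$, interpreting~\eqref{eq:bjl} as $\mathsf{b}\equiv 0$; both are routine once the geometry of $\K$ is pictured.
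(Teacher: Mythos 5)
Your overall route is the same as the paper's: for $j\geq\ell$ you close $[-R,R]$ through a large arc and the curve $\Gamma^+$ bounding $\K^+$, pick up the simple poles $\hat{\kappa}_s(z)$ (simplicity following from $z\notin\omega(\mathcal{C})$) with residues $Z_{j\ell}(\hat{\kappa}_s(z))$, send $R\to\infty$ via Hypothesis~\ref{hyp4}, and recover the remaining entries by the symmetry \eqref{eq:sim4}; for $(ii)$ you track which poles sit on the real axis and use the involution $\kappa\mapsto-\overline{\kappa}$ from \eqref{eq:quartet} together with $\Z(-\overline{\kappa})=-\overline{\Z(\kappa)}$ (equivalent to the paper's $-\Z(\kappa)^\dag$ once $\Z$ is extended symmetrically) to make the $I^+(E)$ contribution hermitian, obtaining $\S(E-\i0)$ from \eqref{eq:sim3} rather than by re-running the contour argument from below — an equivalent shortcut; $(iii)$ is the same continuation-by-continuity argument as in the paper.

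The one step you justify incorrectly is the transfer of Hypothesis~\ref{hyp4} from the fixed point $z_0$ to an arbitrary $z\in\A$. Your parenthetical claim that necessarily $|\omega(\kappa)|\to\infty$, ``otherwise the $z_0$-estimate itself would fail,'' is unwarranted: the estimates at $z_0$ could hold purely because $G_{j\ell}(\kappa)$ decays, with $\omega$ bounded on $\K^+$, and then the asymptotic comparison $|\omega(\kappa)-z|\sim|\omega(\kappa)-z_0|$ is unavailable. The paper's Lemma~\ref{lemma:b} performs the transfer without any growth assumption on $\omega$, using the identity $\frac{1}{\omega(\kappa)-z}=\bigl(1+\frac{z-z_0}{\omega(\kappa)-z}\bigr)\frac{1}{\omega(\kappa)-z_0}$, which gives $|\omega(\kappa)-z|^{-1}\leq\bigl(1+|z-z_0|/\epsilon\bigr)\,|\omega(\kappa)-z_0|^{-1}$ wherever $|\omega(\kappa)-z|>\epsilon$; such an $\epsilon$ exists on $\Gamma^+$ because $\A\cap\omega(\Gamma)=\emptyset$, and the same bound controls the large arc, yielding both the vanishing of the arc term and the absolute convergence of $b_{j\ell}(z)$ (with $b_{j\ell}(E)\in\R$ by dominated convergence). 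Replacing your heuristic by this one-line bound closes the gap. A small slip in $(iii)$: for $s\in I^0(E)$ the branch $\hat{\kappa}_s(z)$ \emph{does} leave $\K^+$ when $z$ crosses into $\A^-$ — that is precisely why $\S^{(\mathrm{II})}\neq\S$ there; what the continuation needs is only that the branches remain in $\K$, where $\G$ and $\omega'$ are defined, and that $\mathsf{b}(z)$ is analytic on $\A$.
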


This statement follows from a long, but straightforward calculation, reported in Appendix~\ref{app:proof}, which makes use of our technical assumptions and some techniques of elementary complex analysis. Here we will discuss about the statement. The self-energy $\mathsf{\Sigma}(z)$ is decomposed into the sum of two terms:
\begin{itemize}
	\item a contour contribution;
	\item residue contributions corresponding to the solutions of the equation $\omega(\kappa)=z$;
\end{itemize}
in particular, when analyzing the boundary values $\mathsf{\Sigma}(E+\i0)$ of the self-energy on the real line, we can distinguish between real and nonreal solutions of the equation $\omega(\kappa)=E$. This distinction will turn largely useful when analyzing, in Section~\ref{sec:identical}, the case of a family of identical atoms; in such a case, the contribution of real solutions will be \textit{dominant}, with nonreal poles and the contour contribution only providing ``small" corrections.

Notice that we can write such an expression for the self-energy in the region $\mathcal{A}$ of the complex plane, excluding countably many values of $z$; since the self-energy is necessarily analytic away from the real line, the singularities in $\omega(\mathcal{C})$ are removable and its extension to the full region $\A$ is trivial. If we can find finitely many subsets $\mathcal{A}_1,\dots,\mathcal{A}_N\subset\mathbb{C}$ that cover the full complex plane up to sets of zero measure, the self-energy is completely determined.

Finally, we remark that, for $z\in\A^-$, the self-energy $\S(z)$ and its analytic continuation $\S^{(\mathrm{II})}(z)$ from $\A^+$ differ by all and only the contributions of all those poles $\hat{\kappa}_s(z)$ that cross the real line when $z$ does. If all poles are nonreal for $z=E\in\mathbb{R}$, the two quantities coincide and the self-energy is continuous on it.

\section{Identical atoms in the infinite line}\label{sec:identical}
In this section we will apply the analysis of Section~\ref{sec:self} to a particular class of cases of remarkable physical importance. Following the discussion in Examples~\ref{ex1}-\ref{ex3}, consider a family of real numbers $x_1,x_2,\dots,x_n\in\R$, fixing $x_1\leq x_2\leq\ldots\leq x_n$, and take
\begin{equation}
	F_j(k)=F(k)\e^{\i kx_j},
\end{equation}
for some function $F(k)$ satisfying $F(-k)=\overline{F(k)}$, and therefore
\begin{equation}
G_{j\ell}(k)=G(k)\,\e^{\i k(x_j-x_\ell)},
\end{equation}
with $G(k)=|F(k)|^2$ (therefore $G(-k)=G(k)$) admitting an analytic extension to the complex region $\K$ satisfying Eq.~\eqref{eq:scalar}. Notice that, while the coupling functions $F_j(k)$ depend on the particular choice of coordinate system, the matrix $\mathsf{G}(k)$, and thus the self-energy (as well as all physical properties of the model) only depend on the mutual distance between the atoms.

It is easy to show that this choice corresponds, physically, to the case of \textit{identical} atoms. Indeed, consider a family of $n$ identical atoms in a one-dimensional space whose centers are placed at the positions $x_1,x_2,\ldots,x_n$, and let $\hat{F}_j(x)$ be the coupling between the $j$th atom and the field in the \textit{position} space. Since the atoms are identical, only differing each other by their \textit{position} $x_j$, there must be $\hat{F}(x)$ such that
\begin{equation}
\hat{F}_j(x)=\hat{F}(x-x_j).
\end{equation}
By performing a Fourier transform, we have
\begin{equation}
F_j(k)=F(k)\,\e^{\i kx_j},
\end{equation}
with $F_j$ and $F$ being the Fourier transforms of $\hat{F}_j$ and $\hat{F}$. Therefore
\begin{equation}
	G_{j\ell}(k)=|F(k)|^2\,\e^{\i k(x_j-x_\ell)}\equiv G(k)\,\e^{\i k(x_j-x_\ell)}.
\end{equation}
Finally, since $\hat{F}(x)$ is real-valued, $F(-k)=\overline{F(k)}$ and hence $G(-k)=G(k)$.

We shall hereafter focus on this particular case; accordingly, we shall also assume the excitation energies of all atoms to be equal, namely, $\varepsilon_1=\ldots=\varepsilon_n\equiv\varepsilon$.

\subsection{Dominant and suppressed terms in the self-energy}
Let us point out some properties of this system which follow easily from Proposition~\ref{prop:prop}: 
\begin{itemize}
	\item[(i)] the self-energy of this system can be written as such for all $z\in\A^+$:
\begin{equation}
\label{eq:selfid}
\S(z)=\mathsf{b}(z)+2\pi\i\sum_{s=1}^r Z(\hat{\kappa}_s(z))\mathsf{\Phi}_n(\hat{\kappa}_s(z)),\qquad\S(\overline{z})=\S(z)^\dag,
\end{equation}
where we define the matrices $\mathsf{\Phi}(\kappa)$, $\mathsf{b}(z)$ and the scalar quantity $Z(\kappa)$ via 
\begin{equation}
	[\mathsf{\Phi}_n(\kappa)]_{j\ell}=\e^{\i\kappa|x_j-x_\ell|},\quad b_{j\ell}(z)=\int_{\Gamma^+}\frac{G(\kappa)}{\omega(\kappa)-z}\,\e^{\i\kappa|x_j-x_\ell|}\,\mathrm{d}\kappa,\quad Z(\kappa)=\frac{G(\kappa)}{\omega'(\kappa)}.
\end{equation}
In particular,
\begin{equation}\label{eq:eq}
\S(E+\i0)=\mathsf{b}(E)+2\pi\i\left(\sum_{s\in I^+(E)} Z(\hat{\kappa}_s(E))\mathsf{\Phi}_n(\hat{\kappa}_s(E))+\sum_{s\in I^0(E)} Z(\hat{\kappa}_s(E))\mathsf{\Phi}_n(\hat{\kappa}_s(E))\right),
\end{equation}
\begin{equation}
\S(E-\i0)=\mathsf{b}(E)+2\pi\i\left(\sum_{s\in I^+(E)} Z(\hat{\kappa}_s(E))\mathsf{\Phi}_n(\hat{\kappa}_s(E))+\sum_{s\in I^0(E)} Z(\hat{\kappa}_s(E))\mathsf{\Phi}_n(\hat{\kappa}_s(E))^\dag\right).
\end{equation}
The same expression holds for the analytic continuation of the self-energy from $\A^+$ to $\A^-$:
\begin{equation}
\label{eq:selfidres}
\S^{(\mathrm{II})}(z)=\mathsf{b}(z)+2\pi\i\sum_{s=1}^r Z(\hat{\kappa}_s(z))\mathsf{\Phi}_n(\hat{\kappa}_s(z)),\qquad\S(\overline{z})=\S(z)^\dag.
\end{equation}
\item[(ii)] For every $j,\ell=1,\dots,n$ and every $s\in I^+(E)$,
\begin{equation}
	|b_{j\ell}(E)|\leq |b(E)|\e^{-m|x_j-x_\ell|},\qquad \Bigl|[\mathsf{\Phi}_n(\hat{\kappa}_s(E))]_{j\ell}\Bigr|\leq\e^{-\Im\hat{\kappa}_s(E)|x_j-x_\ell|},
\end{equation}
where
\begin{equation}
m=\min_{\kappa\in\Gamma^+}\Im\kappa,\qquad	b(E)=\int_{\Gamma^+}\frac{G(\kappa)}{\omega(\kappa)-E}\,\mathrm{d}k.
\end{equation}
\end{itemize}
Indeed, $(i)$ is an immediate consequence of Prop.~\ref{prop:prop}, while $(ii)$ (which will be crucial in the following) holds because of our choice of $\mathsf{G}(\kappa)$. We will refer to $\mathsf{\Phi}_n(\kappa)$ as the \textit{phase matrix}. 

Now, recall that the stable states of Friedrichs-Lee models are characterized by an energy value $E\in\R$ and an $\bm{a}\in\mathbb{C}^n$ satisfying
\begin{equation}
	\det\bigl(\mathcal{E}-E\,\mathsf{Id}-\S(E+\i0)\bigr)=0,\qquad\left(\mathcal{E}-E\,\mathsf{Id}-\S(E+\i0)\right)\bm{a}=\bm{0}.
	\label{eq:eig}
\end{equation}
Assuming their excitation energies to be equal, i.e. $\varepsilon_1=\ldots=\varepsilon_n\equiv\varepsilon$ for some $\varepsilon\in\R$, the conditions in Eq.~\eqref{eq:eig} yield an \textit{$E$-dependent eigenproblem} for the matrix $\S(E+\i0)$:
\begin{equation}
\det\bigl((\varepsilon-E)\mathsf{Id}-\S(E+\i0)\bigr)=0,\qquad\left((\varepsilon-E)\mathsf{Id}-\S(E+\i0)\right)\bm{a}=\bm{0}.
\label{eq:eig2}
\end{equation}
We must therefore search for eigenvalues of the self-energy and their corresponding eigenvectors. In this picture, the assertion $(ii)$ in the previous proposition will be of fundamental importance in the following: indeed, because of it, the \textit{off-diagonal} terms of all the contribution of the self-energy coming from following quantities:
\begin{itemize}
\item the matrix $\mathsf{b}(E)$, and
\item the residua of the poles $\hat{\kappa}_s(E)$ for $s\in I^+(E)$,
\end{itemize}
are \textit{exponentially suppressed}, in the following sense: if $m$ and, for all $s\in I^+(E)$, $\Im\hat{\kappa}_s(E)$ are ``sufficiently large", we may treat their off-diagonal components as a ``small" correction to the self-energy; their non-negligible diagonal terms will provide a shift in the value of $\varepsilon-E$.

Let us sum up these considerations and express the self-energy by stressing the distinction between suppressed terms and dominant ones. 
	Let $[E_1,E_2]\subset\mathbb{R}$ such that the quantity
	\begin{equation}
		m'=\min_{E\in[E_1,E_2],s\in I^+(E)}\Im\kappa_s(E)
	\end{equation}
	is strictly positive, and $M=\min\,\{m,m'\}$; also define
	\begin{equation}
		\alpha=\max_{E\in[E_1,E_2],s\in I^+(E)}\{b(E),Z(\hat{\kappa}_s(E))\}.
	\end{equation}
	Then, defining 
	\begin{equation}
	Z_{\mathrm{tot}}(E)=\sum_{s\in I^0(E)}Z(\hat{\kappa}_s(E)),\qquad z_s(E)=\frac{Z_s(\hat{\kappa}_s(E))}{Z_{\mathrm{tot}}(E)}
	\end{equation}
	and\footnote{Notice that, while not apparent from here, this quantity is real because of the symmetry properties of $Z(k)$.}
	\begin{equation}\label{eq:correction}
	\Delta(E)=b(E)+\i\sum_{s\in I^+(E)}Z(\hat{\kappa}_s(E)),
	\end{equation}
	the eigenvalue equation in Eq.~\eqref{eq:eig2} can be written as such:
	\begin{equation}
	\det\left\{\left(\sum_{s\in I^0(E)}z_s(\hat{\kappa}_s(E))\mathsf{\Phi}_n(\hat{\kappa}_s(E))+\mathsf{B}(E)\right)-\i\Lambda(E)\,\mathsf{Id}\right\}=0,
	\label{eq:eig4}
	\end{equation}
	where 
	\begin{equation}\label{lambda}
	\Lambda(E)=\frac{1}{Z_{\mathrm{tot}}(E)}\left(E-\varepsilon+\Delta(E)\right)
	\end{equation}
	and $\mathsf{B}(E)$ is a matrix with zero diagonal elements and whose off-diagonal elements satisfy
	\begin{equation}\label{eq:exp}
		|B_{j\ell}(E)|\leq\alpha\,\e^{-M|x_j-x_\ell|}.
	\end{equation}
This expression follows by substituting the expression~\eqref{eq:eq} for the self-energy in Eq.~\eqref{eq:eig2}, dividing by $Z_{\mathrm{tot}}(E)$, and applying the definitions of the quantities in the statement. As anticipated in Section~\ref{sec:self}, there is thus an explicit distinction between two types of quantities:
\begin{itemize}
	\item ``dominant" terms, associated to poles in $I^0(E)$ and the diagonal part of the integration on the contour $\Gamma^+$ and of the residua of poles in $I^+(E)$;
	\item ``suppressed" terms, associated to the off-diagonal part of the integration on the contour $\Gamma^+$ and of the residua of poles in $I^+(E)$,
\end{itemize}
all latter terms being grouped in the matrix $\mathsf{B}(E)$. We remark that the distinction between the ``dominant" and ``suppressed" terms ceases to hold when, in our desired energy range, there is some pole which becomes close to the real axis; this may happen whenever the interval $[E_1,E_2]$ is close to some critical value of $\omega$; when ``crossing" a critical value, the system may undergo a ``transition" and a pole becomes dominant or ceases to be dominant.

\subsection{Identical atoms with single dominant pole}\label{subsec:dominant}
As a particular case, let us consider a Friedrichs-Lee model for identical atoms with the following property: there is an interval $[E_0,E_1]\subset\R$ on which $\omega(k)=E$ admits a single real solution, {and thus the integrand in the definition of the self-energy admits one real pole, yielding a contribution to the self-energy which is \textit{dominant} in the sense discussed above. This happens, for instance, whenever the dispersion relation $k\in\R\mapsto\omega(k)\in\R$ is monotone in $\R^+$, this is the case for every $E$ in the range of $\omega$, i.e. for all $E>\omega(0)$. 

In this case, Eq.~\eqref{eq:eig4} reads
	\begin{equation}
\det\Bigl(\mathsf{\Phi}_n(\hat{\kappa}(E))+\mathsf{B}(E)-\i\Lambda(E)\,\mathsf{Id}\Bigr)=0.
\label{eq:eig5}
\end{equation}
As we will see, the spectral properties of the matrix $\mathsf{\Phi}_n(\kappa)$ can be studied explicitly: a recurrence relation for its characteristic polynomial, as well as an explicit expression for its determinant, will be evaluated in the next paragraph, in which we will also obtain exact results for the particular, yet instructive, case in which there are no suppressed terms in the self-energy. In the second and last paragraph, we will provide a brief discussion about the general case. Finally, a particular Friedrichs-Lee model belonging to the class of models examinated here, for which explicit expressions for all quantities can be obtained, will be briefly examined in Subsection~\ref{subsec:wqed}.

\paragraph{Simplest case: entire parameters, single couple of poles}\label{app:phase}
Let us start by examining the case in which the following conditions hold:
\begin{itemize}
	\item $\mathsf{b}(E)=\mathsf{0}$, i.e. there is no contour contribution. This is the case whenever the functions $\omega(k)$ and $G(k)$ admit analytic continuations in the \textit{whole} complex plane, i.e. $\omega(\kappa)$ and $G(\kappa)$ are \textit{entire} complex functions (e.g. they are polynomials);
	\item the equation $\omega(\kappa)=z$ admits a \textit{single} couple of solutions $\hat{\kappa}(z)$ in the whole complex plane,
\end{itemize}
and hence, for all $E>\omega(0)$, Eq.~\eqref{eq:eig5} holds with $\mathsf{B}(E)=\mathsf{0}$ and $\Delta(E)=0$ (see Eq.~\eqref{lambda} and~\eqref{eq:correction}), i.e.
	\begin{equation}
\det\left(\mathsf{\Phi}_n(\hat{\kappa}(E))-\i(E-\varepsilon)\,\mathsf{Id}\right)=0,
\label{eq:eig6}
\end{equation}
thus
\begin{equation}\label{eq:implicit}
E_j=\varepsilon-\i\,Z(\hat{\kappa}(E_j))\lambda_j\left(\hat{\kappa}(E_j)\right),
\end{equation}
with $\lambda_j(\kappa)$ being the $j$th eigenvalue of the phase matrix. As a result, the qualitative properties of the eigenvalues and resonances of the system will crucially depend on those of the phase matrix $\mathsf{\Phi}_n(\kappa)$. While Eq.~\eqref{eq:implicit} is implicit, if we suppose that the atom-boson coupling is \textit{weak}, heuristically we expect $Z(\hat{\kappa}(E_j))$ to be a ``small'' quantity, so that, via a series expansion and an iteration of Eq.~\eqref{eq:implicit}, we can take\footnote{This heuristic reasoning can be made more precise by rescaling the function $G(k)$, and thus $Z(k)$ as well, via a coupling constant $\beta>0$, and performing a series expansion in $\beta$: Eq.~\eqref{eq:explicit} then holds modulo $\mathcal{O}(\beta^2)$ corrections.}
\begin{equation}\label{eq:explicit}
E_j\sim\varepsilon-\i\,Z(\hat{\kappa}(\varepsilon))\lambda_j\left(\hat{\kappa}(\varepsilon)\right).
\end{equation}

It is therefore worth to take a closer look at the spectral properties of this matrix. Recall that
\begin{equation}
\left[\mathsf{\Phi}_n(\kappa)\right]_{j\ell}=\e^{\i\kappa|x_j-x_\ell|}=\begin{cases}
\e^{\i\kappa(x_j-x_\ell)},&j\geq\ell;\\
\e^{\i\kappa(x_\ell-x_j)},&\ell\geq j
\end{cases}
\end{equation}
This is an \textit{euclidean} one-dimensional matrix, i.e.\ is a function of the \textit{distances} between the set of points $\{x_j\}_{j=1,\dots,n}$ on the real line; besides, all the elements of its principal diagonal are $+1$, and hence its trace satisfies $\operatorname{tr}\mathsf{\Phi}_n(k,\eta)=n$.

Writing $\kappa=k+\i\eta$, with $k\in\R$ and $\eta\geq0$, and $\mathsf{\Phi}_n(k+\i\eta)\equiv\mathsf{\Phi}_n(k,\eta)$, the matrix can also decomposed in its hermitian and anti-hermitian components $\mathsf{\Phi}_n(k,\eta)=\mathsf{C}_n(k,\eta)+\i \mathsf{S}_n(k,\eta)$, where
\begin{equation}
[\mathsf{C}_n(k,\eta)]_{j\ell}=\e^{-\eta|x_j-x_\ell|}\cos k(x_j-x_\ell),\quad[\mathsf{S}_n(k,\eta)]_{j\ell}=\e^{-\eta|x_j-x_\ell|}\sin k|x_j-x_\ell|,
\end{equation}
both matrices being real and hermitian; their diagonal elements are respectively all $+1$ and all $0$, hence $\operatorname{tr}\mathsf{C}_n(k,\eta)=n$ and $\operatorname{tr}\mathsf{S}_n(k,\eta)=0$. We will refer to them as the \textit{cosine} and \textit{sine matrix}. It is easy to show that the cosine matrix $\mathsf{C}_n(k,\eta)$ is nonnegative for $\eta=0$ and positive for $\eta>0$, thus all eigenvalues of $\mathsf{\Phi}_n(\kappa)$ are characterized by
\begin{equation}
	\Re\lambda_j(\kappa)\geq0,\qquad j=1,\dots,n,
\end{equation}
and hence the corresponding resonances of the atom-field system will satisfy, in the weak coupling regime (see Eq.~\eqref{eq:explicit} and the related discussion),
\begin{equation}
	\Im E_j\sim- Z_{\mathrm{tot}}(\varepsilon)\Re\lambda_j(\hat{\kappa}(\varepsilon))\leq0,
\end{equation}
compatibly with the general fact that resonances always lie in the lower half-plane. 

A quantitative study of the spectral properties of the system requires the evaluation of the eigenvalues $\lambda_j(\kappa)$; in fact, a recurrence formula for the characteristic polynomial $p_n(\lambda;k,\eta)$ of $\mathsf{\Phi}_n(k,\eta)$ can be found, whence its eigenvalues can be conveniently computed. 
\begin{proposition}\label{prop:prop2}
	The characteristic polynomial of $\mathsf{\Phi}_n(\kappa)$ satisfies the recurrence system
	\begin{equation}
		\begin{cases}
		p_1(\lambda;\kappa)=1-\lambda;\\
		p_2(\lambda;\kappa)=(1-\lambda)^2-\e^{2\i\kappa(x_2-x_{1})};\\
		p_n(\lambda;\kappa)=\left[(1-\lambda)-(1+\lambda)\e^{2\i\kappa(x_n-x_{n-1})}\right]p_{n-1}(\lambda;\kappa)-\lambda^2\e^{2\i\kappa(x_n-x_{n-1})}p_{n-2}(\lambda;\kappa),
		\end{cases}
		\label{recurr}
	\end{equation}
	and
	\begin{equation}
		\det\mathsf{\Phi}_n(\kappa)=p_n(0,\kappa)=\prod_{j=1}^{n-1}[1-\e^{2\i\kappa(x_{j+1}-x_{j})}].
		\label{det}
	\end{equation}
\end{proposition}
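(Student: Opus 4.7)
\textbf{Plan of proof for Proposition~\ref{prop:prop2}.} The approach is to exploit the multiplicative structure of the exponentials: since $x_1\leq\dots\leq x_n$, one has $[\mathsf{\Phi}_n(\kappa)]_{j\ell}=\mathrm{e}^{\i\kappa(x_n-x_{n-1})}\cdot[\mathsf{\Phi}_n(\kappa)]_{n-1,\ell}$ for every $\ell\leq n-2$, i.e.\ the first $n-2$ entries of the last row of $\mathsf{\Phi}_n(\kappa)$ are proportional to those of the $(n-1)$-th row. This suggests a single row operation that almost reduces $\mathsf{\Phi}_n(\kappa)-\lambda\mathsf{Id}$ to triangular form. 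The base cases $p_1(\lambda;\kappa)=1-\lambda$ and $p_2(\lambda;\kappa)=(1-\lambda)^2-\mathrm{e}^{2\i\kappa(x_2-x_1)}$ are immediate by direct computation.

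For the inductive step, set $\alpha=\mathrm{e}^{\i\kappa(x_n-x_{n-1})}$ and perform the row operation $R_n\to R_n-\alpha R_{n-1}$ on $\mathsf{\Phi}_n(\kappa)-\lambda\mathsf{Id}$. A short calculation shows that the new $n$-th row vanishes on the first $n-2$ positions, equals $\alpha\lambda$ at position $n-1$, and equals $(1-\lambda)-\alpha^2$ at position $n$. Expanding the determinant along this row,
\begin{equation*}
p_n(\lambda;\kappa)=\bigl[(1-\lambda)-\alpha^2\bigr]\,M_{n,n}-\alpha\lambda\,M_{n,n-1},
\end{equation*}
where $M_{n,n}=p_{n-1}(\lambda;\kappa)$ because the first $n-1$ rows are untouched, while $M_{n,n-1}$ is the determinant of the matrix obtained by removing row $n$ and column $n-1$.

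The main technical point is the evaluation of $M_{n,n-1}$. Its first $n-2$ columns coincide with those of $\mathsf{\Phi}_{n-1}(\kappa)-\lambda\mathsf{Id}$, while its last column is the restriction of column $n$ of $\mathsf{\Phi}_n(\kappa)$ to rows $1,\dots,n-1$, which by the factorisation above equals $\alpha$ times the last column $\bm{u}$ of $\mathsf{\Phi}_{n-1}(\kappa)$. Writing $\bm{u}=\tilde{\bm{u}}+\lambda\bm{e}_{n-1}$, with $\tilde{\bm{u}}$ the last column of $\mathsf{\Phi}_{n-1}(\kappa)-\lambda\mathsf{Id}$, and using multilinearity of the determinant,
\begin{equation*}
M_{n,n-1}=\alpha\bigl[p_{n-1}(\lambda;\kappa)+\lambda\,p_{n-2}(\lambda;\kappa)\bigr],
\end{equation*}
since the determinant with last column $\bm{e}_{n-1}$, after Laplace expansion along that column, yields the characteristic polynomial of the upper-left $(n-2)\times(n-2)$ block, which is precisely $p_{n-2}(\lambda;\kappa)$. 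Substituting back and simplifying produces the recurrence
\begin{equation*}
p_n(\lambda;\kappa)=\bigl[(1-\lambda)-(1+\lambda)\alpha^2\bigr]p_{n-1}(\lambda;\kappa)-\lambda^2\alpha^2\,p_{n-2}(\lambda;\kappa),
\end{equation*}
which is~\eqref{recurr}. Setting $\lambda=0$ kills the $p_{n-2}$ term, leaving $p_n(0;\kappa)=[1-\alpha^2]\,p_{n-1}(0;\kappa)$; iterating from $p_1(0;\kappa)=1$ yields the product formula~\eqref{det}. The sole genuinely delicate step is the decomposition of $M_{n,n-1}$ through the splitting $\bm{u}=\tilde{\bm{u}}+\lambda\bm{e}_{n-1}$; everything else is bookkeeping.
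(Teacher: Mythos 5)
Your proof is correct and follows essentially the same route as the paper's: both exploit the fact that the last row (and column) of $\mathsf{\Phi}_n(\kappa)$ is proportional, up to the diagonal entry, to the penultimate one, clear it by an elementary operation, and obtain the three-term recurrence from the resulting Laplace expansion, with the $\lambda=0$ specialization giving the determinant formula. The only cosmetic difference is that the paper performs symmetric row \emph{and} column manipulations to reach a bordered form, whereas you use a single row operation and recover the minor $M_{n,n-1}$ via multilinearity in the last column; both computations check out.
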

The proof of this statement is reported in Appendix~\ref{app:proof2}. Notice that, for real $k$, we may also write
\begin{equation}
\det\mathsf{\Phi}_n(k,0)=(-2\i)^{n-1}\e^{\i k(x_n-n_{n-1})}\prod_{j=1}^{n-1}\sin k(x_{j+1}-x_j).
\end{equation}
\begin{figure}
\centering
\captionsetup{width=\linewidth}
\subfigure[$x=0.1;$]{\includegraphics[width=0.425\linewidth]{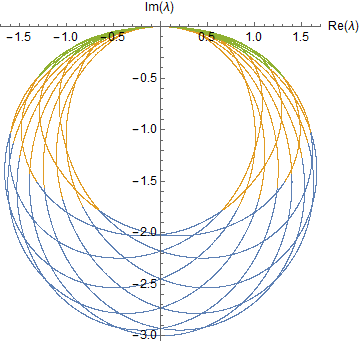}}\qquad
\subfigure[$x=0.2;$]{\includegraphics[width=0.425\linewidth]{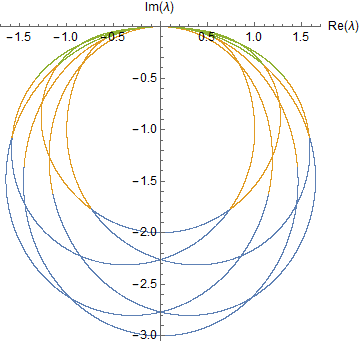}}\\
\subfigure[$x=0.3;$]{\includegraphics[width=0.425\linewidth]{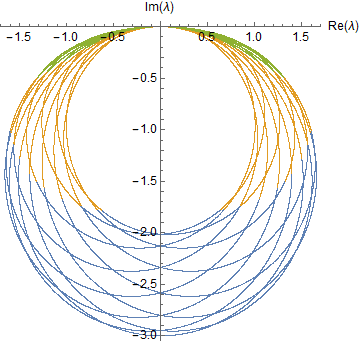}}\qquad
\subfigure[$x=0.4;$]{\includegraphics[width=0.425\linewidth]{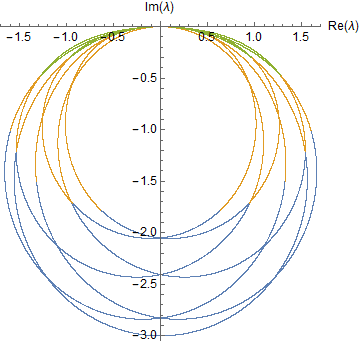}}\\
\subfigure[$x=0.5;$]{\includegraphics[width=0.425\linewidth]{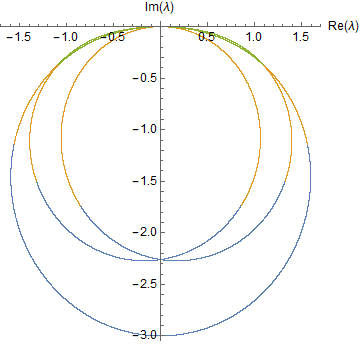}}\qquad
\subfigure[$x=1/\sqrt{2}.$]{\includegraphics[width=0.425\linewidth]{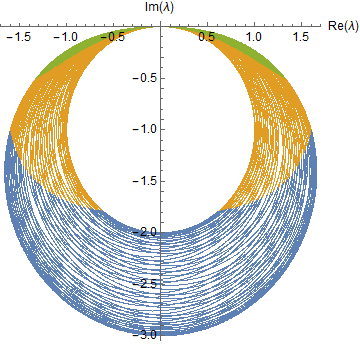}}\\
\caption{Eigenvalues of the matrix $-\i\mathsf{\Phi}_3(k,0)$ in the complex plane for various values of the ratio $x=\frac{x_2-x_1}{x_3-x_2}$ and $k$ ranging  in $[0,200]$; the different colors of the points correspond to distinct determinations of the roots of the polynomial $p_3(\lambda,k)$.
}
\label{fig:eigen}
\end{figure}
\begin{figure}
\centering
\captionsetup{width=\linewidth}
\subfigure[$x=0.1,\,\eta/\pi=0;$]{\includegraphics[width=0.425\linewidth]{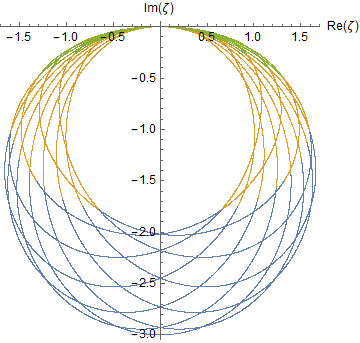}}\qquad
\subfigure[$x=0.1,\eta/\pi=0.1;$]{\includegraphics[width=0.425\linewidth]{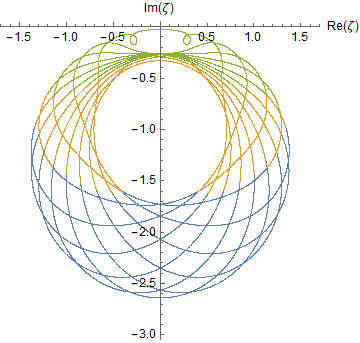}}\\
\subfigure[$x=0.1,\eta/\pi=0.3;$]{\includegraphics[width=0.425\linewidth]{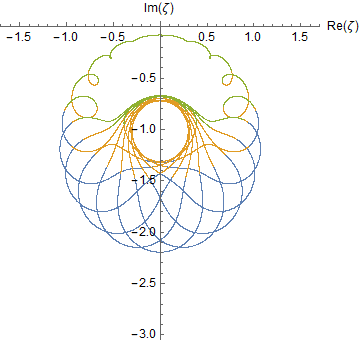}}\qquad
\subfigure[$x=0.1,\eta/\pi=0.4;$]{\includegraphics[width=0.425\linewidth]{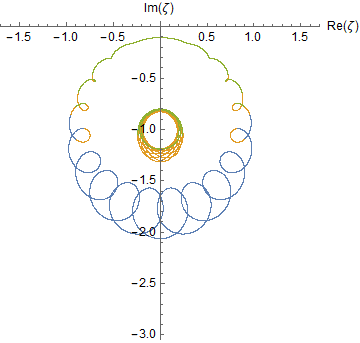}}\\
\subfigure[$x=0.1,\eta/\pi=0.5;$]{\includegraphics[width=0.425\linewidth]{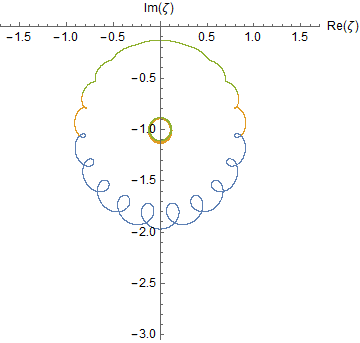}}\qquad
\subfigure[$x=0.1,\eta/\pi=1.0$]{\includegraphics[width=0.425\linewidth]{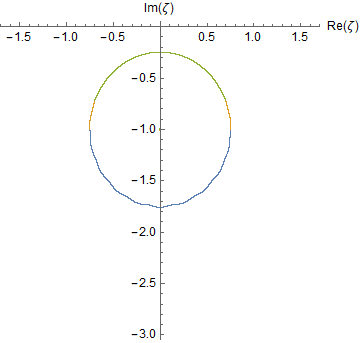}}\\
\caption{Eigenvalues of the matrix $-\i\mathsf{\Phi}_3(k,\eta)$ in the complex plane with $x=0.1$, $k$ ranging  in $[0,200]$, and various values of $\eta$; the different colors of the points correspond to distinct determinations of the roots of the polynomial $p_3(\lambda,k+\i\eta)$.
}
\label{fig:eigen2}
\end{figure}
\begin{figure}
	\centering
	\captionsetup{width=\linewidth}
	\subfigure[$x=1/3,\,x'=1/3;$]{\includegraphics[width=0.425\linewidth]{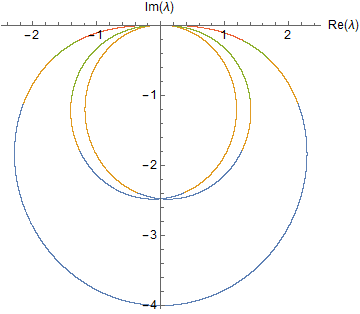}}\qquad
	\subfigure[$x=1/4,\,x'=3/4;$]{\includegraphics[width=0.425\linewidth]{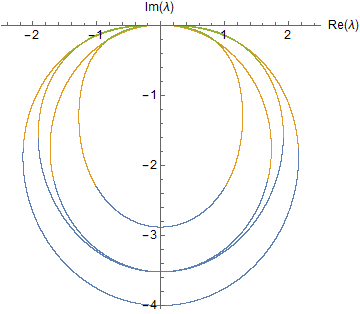}}\\
	\subfigure[$x=0.1,\,x'=0.4;$]{\includegraphics[width=0.425\linewidth]{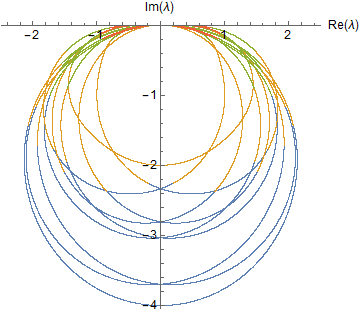}}\qquad
	\subfigure[$x=0.1,\,x'=0.5;$]{\includegraphics[width=0.425\linewidth]{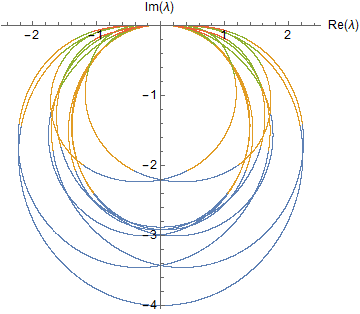}}\\
	\subfigure[$x=0.2,\,x'=1/\sqrt{7};$]{\includegraphics[width=0.425\linewidth]{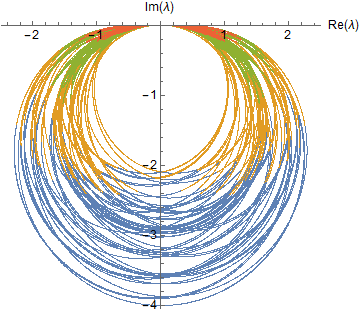}}\qquad
	\subfigure[$x=1/\sqrt{5},\,x'=1/\sqrt{7}.$]{\includegraphics[width=0.425\linewidth]{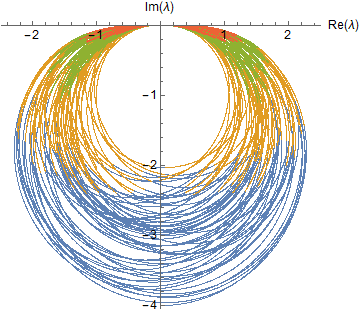}}\\
	\caption{Eigenvalues of the matrix $-\i\mathsf{\Phi}_4(k,0)$ in the complex plane for various values of the ratios $x=\frac{x_2-x_1}{x_4-x_1}$ and $x'=\frac{x_3-x_2}{x_4-x_1}$, and $k$ ranging in $[0,200]$; the different colors of the points correspond to distinct determinations of the roots of the polynomial $p_4(\lambda,k)$.
	}
	\label{fig:eigen3}
\end{figure}

Varying $\kappa$, we can thus study the spectrum of $\mathsf{\Phi}_n(\kappa)$, and hence the eigenvalues and resonances of the system, for every value of the parameters $x_1,\dots,x_n$; see Figs.~\ref{fig:eigen}--\ref{fig:eigen3}, where we plot the eigenvalues of $-\i\mathsf{\Phi}_n(k,\eta)$ for fixed chosen values of $\eta$ and $x_1,\dots,x_n$:
\begin{enumerate}
\item for any fixed value of $\eta\geq0$ and $x_1,\dots,x_n$, varying $k\in\mathbb{R}$ the eigenvalues describe trajectories in the lower complex half-plane that are closed \textit{if and only if} all the ratios between the distances $x_2-x_1$, $x_3-x_2$, ..., $x_n-x_{n-1}$ are in $\mathbb{Z}$; if the ratio between some couple of distances is irrational, the trajectories will fill a \textit{dense} region of the lower half-plane.
	\item for $\eta=0$, the eigenvalues of $-\i\mathsf{\Phi}_n(k,0)$ are contained inside a region of the complex plane with $0\leq\operatorname{Re}\lambda\leq n$ which is tangent to the imaginary axis at the origin $\lambda=0$, and outside the unit circle centered in $\lambda=1$. In particular, $\lambda=0$ is the \textit{only} eigenvalue with $\operatorname{Re}\lambda=0$.
	\item for $\eta>0$, the region in which the eigenvalues are contained is progressively ``shrunk" and asymptotically collapses, as $\eta\to\infty$, in the point $\lambda=1$.
	\end{enumerate}
	Before giving a brief explanation of these properties, as well as their relation to the bound states and resonances of the related Friedrichs-Lee model, let us introduce the following nomenclature. We will say that $k_0\in\mathbb{R}$ \textit{resonates} with the $j$th and the $\ell$th atom ($j>\ell$) if there is $\nu\in\mathbb{N}$ such that
	\begin{equation}\label{eq:ressing}
k_0=\frac{\nu\pi}{x_j-x_\ell},\qquad\nu\in\mathbb{N};
	\end{equation}
	if $k_0$ satisfies Eq.~\eqref{eq:ressing}, then $[\mathsf{\Phi}_n(k_0,0)]_{j\ell}=(-1)^{\nu+1}$. Clearly, if $k_0$ is resonating for the couple $(j,\ell)$, then for all $\nu'\in\mathbb{N}$ the value $\nu'k_0$ is also resonating for the same couple. Besides, $k_0$ can resonate with \textit{two} couples of atoms (say, the couple $(j,\ell)$ and the couple $(j',\ell')$) if and only if the ratio between $x_j-x_\ell$ and $x_{j'}-x_{\ell'}$ is a rational number, i.e. there are $p,q\in\mathbb{N}$ such that
	\begin{equation}
	\frac{x_j-x_\ell}{x_{j'}-x_{\ell'}}=\frac{p}{q};
	\end{equation}
	the same property holds for three of more couples of atoms.
	
Now, property $(i)$ simply follows from the fact that, for every $n\in\mathbb{N}$, the characteristic polynomial $p_n(\lambda,\kappa)$ is composed of finitely many terms $\e^{2\i\kappa(x_2-x_1)},\ldots,\e^{2\i\kappa(x_n-x_{n-1})}$; in this case, given the equations
	\begin{equation}\label{eq:res}
	k(x_{j+1}-x_j)=2\nu_j\pi,\qquad \nu_j\in\mathbb{N},\qquad j=1,\dots,n-1,
	\end{equation}
	then, if all the distances between atoms are commensurable, there is some $k_0\in\mathbb{R}$ such that $k=\nu' k_0$ solves Eq.~\eqref{eq:res} for all $\nu'\in\mathbb{N}$; in other words, starting from $k=0$, the curve closes in itself at every $k=\nu' k_0$. If some of the ratios are not commensurable, the equation above will only be satisfied for $k=0$ and the curves never close. 
	
	As for $(ii)$ and $(iii)$, notice that the quadratic forms associated to the matrices $\mathsf{C}_n(k,\eta)$ and $\mathsf{S}_n(k,\eta)$ read, for all $\bm{u}\in\mathbb{C}^n$,
	\begin{eqnarray}\label{eq:realpart}
	\bm{u}^\dag\mathsf{C}_n(k,\eta)\bm{u}&=&\sum_{j=1}^n|u_j|^2+\sum_{j=1}^n\sum_{j\neq\ell=1}^{n}\e^{-\eta|x_j-x_\ell|}\cos k(x_j-x_\ell)\Re\,\overline{u_j}u_\ell;\\
			\bm{u}^\dag\mathsf{S}_n(k,\eta)\bm{u}&=&\sum_{j=1}^n\sum_{j\neq\ell=1}^{n}\e^{-\eta|x_j-x_\ell|}\sin k(x_j-x_\ell)\Re\,\overline{u_j}u_\ell.
		\end{eqnarray}
		Clearly, for a fixed vector $\bm{u}$, the maximum possible value for the quadratic form of $\mathsf{C}_n(k,\eta)$ is obtained when all the terms $\cos k(x_j-x_\ell)$ equal one; therefore the inequality is saturated for $k$ satisfying Eq.~\eqref{eq:res}, which always happens if $k=0$ and, as discussed above, also for countably many other values of $k$ in the case of rational distances between all couples of atoms. This implies
		\begin{equation}
		\mathsf{C}_n(k,\eta)\leq\mathsf{C}_n(0,\eta),
		\end{equation}
		where the inequality is understood in the matrix sense, and hence the imaginary part of all the eigenvalues of $-\i\mathsf{\Phi}_n(k,\eta)$ is bounded from below by the largest eigenvalue of $-\i\mathsf{C}_n(0,\eta)$, whose absolute value, as $\eta$ grows from $0$ to $+\infty$, decreases monotonically from $n$ to $1$. Bounds for the real part of the eigenvalues may be found similarly. Besides, the fact that the spectrum ``collapses" at the point $-\i$ as $\eta\to\infty$ simply follows from the fact that, in that limit, all off-diagonal-terms of the self-energy vanish exponentially.

	These abstract properties are crucial, for our purposes, by virtue of Eq.~\eqref{eq:explicit}, which yields (in the small-coupling regime) a direct relation between the the resonances of the model and the eigenvalues of the phase matrix at $k=\hat{\kappa}(\varepsilon)$. Because of that, the curves in Figs.~\ref{fig:eigen} and~\ref{fig:eigen3} represent, for fixed values of the parameters $x_1,\dots,x_n$ and up to a multiplicative coupling-dependent constant, \textit{all admissible values} for the distances between the resonances and $\varepsilon$. By changing $\varepsilon$, one changes $k=\hat{\kappa}(\varepsilon)$ and thus ``moves along the curves". In particular, the analysis above shows that the absolute value of the imaginary part of said resonances cannot be arbitrarily large, and can at most reach a value which increases with the number $n$ of atoms.

Finally, let us focus on the case in which the eigenvalues are real, which, again by Eq.~\eqref{eq:explicit}, corresponds in our model to the emergence of a real-valued resonance, and thus of a bound state in the continuum. The only possible \textit{real} eigenvalue of $-\i\mathsf{\Phi}_n(k,\eta)$, which is thus linked to a \textit{real} energy of the system and hence to a proper bound state of the system, is obtained for $\eta=0$ and corresponds to $\lambda=0$. This happens if and only $\eta=0$ and is there is $j=1,\dots,n$ and $\nu\in\mathbb{Z}$ such that
	\begin{equation}
		k=\frac{\nu\pi}{x_{j+1}-x_j}.
		\label{reson}
	\end{equation}
	The degeneracy of the zero eigenvalue corresponds to the number of values of $j$ for which Eq.~\eqref{reson} is satisfied, i.e. to the number of couples of atoms which the value of $k$ ``resonates" with; the corresponding eigenspace is spanned by all the vectors $e^{(j)}\in\mathbb{C}^n$ defined as such:
	\begin{equation}
		e^{j}_\ell=\delta_{j,\ell}+(-1)^{\nu+1}\delta_{j+1,\ell}.
	\end{equation}
	Physically, a stable configuration for the system is obtained whenever at least one couple of \textit{consecutive atoms} resonates with $k$. Two particular cases are the following:
	\begin{itemize}
\item only \textit{one} couple of atoms resonates with $k$; to fix ideas, let the resonating atoms be the first and second ones, with positions $x_1$ and $x_2$, and hence
\begin{equation}
k=\frac{\nu\pi}{x_2-x_1}
\end{equation}
for some $\nu\in\mathbb{N}$, while, for all other couples of consecutive atoms, the equality above does not hold for any $\nu'\in\mathbb{N}$. This happens if the distances between consecutive couples of atoms are all mutually irrational. In that case, the atom configuration associated to the corresponding eigenvalue will be, up to a global normalization constant,
\begin{equation}
\bm{a}=(1,(-1)^{\nu+1},0,0,\ldots,0).
\end{equation}
Therefore, all other atoms have zero survival amplitude: this is a two-atom state.
\item \textit{all} couples of atoms resonate with $k$, i.e. Eq.~\eqref{eq:res} is satisfied. As discussed above, this can only happen if the distances between consecutive couples of atoms are all mutually rational, for example, if the atoms are placed on a regular one-dimensional lattice, i.e. $x_j=(j-1)d$ for some $d>0$.
	\end{itemize}

\paragraph{General case and degeneracy lifting}
In general, the matrix $\mathsf{B}(E)$ and the correction $\Delta(E)$ in the definition~\eqref{lambda} of $\Lambda(E)$ cannot be assumed to be zero: in general, there are suppressed poles, or $\K\neq\mathbb{C}$ (and hence the contribution of the contour integral on $\partial\K^+$ is nontrivial), or both. If so, Eq.~\eqref{eq:eig5} can no longer be exactly interpreted as an $E$-dependent eigenvalue equation for the phase matrix $\mathsf{\Phi}_n(\kappa)$. However, since all terms in the matrix $\mathsf{B}(E)$ are exponentially small by Eq.~\eqref{eq:exp}, heuristically we may expect the results of the previous paragraph to be still a good approximation of the actual ones.

We want to remark, however, that in some cases the role of the additional matrix $\mathsf{B}(E)$ may be of fundamental importance. In the previous paragraph, we have shown that, if there is are two or more couples of consecutive atoms whose distances are commensurable, there will be some value $k_0$ of $k$ which \textit{resonates} with all couples and, as a result, $\mathsf{\Phi}_n(k_0,0)$ will admits $0$ as a degenerate eigenvalue. Consequently, if $\hat{k}(\varepsilon)=k_0$, the field-atom system will admit a \textit{degenerate} space of bound states in the continuum with energy $E=\varepsilon$. However, in the presence of a nonzero matrix $\mathsf{B}(E)$, this degeneracy is generally going to be ``lifted".

Here we will not attempt to study this phenomenon in full generality; instead, we will provide a remarkable example which was studied in greater detail in \cite{bic2,bic3}, where we refer for details. Consider a family of \textit{equally spaced} atoms, i.e. take 
\begin{equation}
x_j=(j-1)d,\qquad j=1,\dots,n,
\end{equation}
for some $d>0$. In this case, the phase matrix $\mathsf{\Phi}_n(k,0)$ admits $0$ as an $(n-1)$-times degenerate eigenvalue if and only if
\begin{equation}\label{eq:resarray}
	k=\frac{\nu\pi}{d},\qquad\nu\in\mathbb{N};
\end{equation}
for those values of momentum, all couples of atom resonate. By our previous discussion one readily shows that, \textit{in the absence of corrections} (i.e. if Eq.~\eqref{eq:eig6}), the field-atom system admits an $(n-1)$-dimensional eigenspace of stable states, with energy $E=\varepsilon$, corresponding to the only constraint
\begin{equation}\label{eq:deg1}
	\sum_{j=1}^n(-1)^{j}a_j=0\text{  if $\nu$ is odd}
\end{equation}
or
\begin{equation}\label{eq:deg2}
\sum_{j=1}^na_j=0\text{  if $\nu$ is even}.
\end{equation}
In the general case, Eq.~\eqref{eq:eig5} holds. If one simply discarded the contribution of the matrix $\mathsf{B}(E)$ in Eq.~\eqref{eq:eig5}, thus only keeping diagonal corrections, one would only gain a $\nu$-dependent shift of the value of $\varepsilon$ corresponding to the resonant values~\eqref{eq:resarray} of the momentum, but the degeneracy subspaces defined by either Eq.~\eqref{eq:deg1} or~\eqref{eq:deg2} would be unchanged; however, if the matrix $\mathsf{B}(E)$ is taken into account, the degeneracy will be broken, as it is explicitly shown in \cite{bic2} for the cases $n=3,4$. A more complete study of this phenomenon for a regular array of a generic number $n$ of quantum atoms is presented in \cite{bic3}, where the symmetry properties of the actual stable states are discussed in greater detail and the physical meaning of this phenomenon is addressed.

	\subsection{An example from waveguide QED}\label{subsec:wqed}
	
	We will conclude this section by briefly revising, at the light of the discussion in previous sections and the formalism developed, a concrete instance of Friedrichs-Lee model describing a family of two-level emitters interacting with an electromagnetic field in a one-dimensional geometry, e.g. a photonic waveguide; this case was extensively studied in Refs. \cite{bic1,bic2,bic3}. We will show that the general formalism developed here does indeed apply to this case.
	
	From first principles, such a physical scenario can be effectively described, under some approximations (see \cite[Appendix A]{bic1}), by a Friedrichs-Lee model with
	\begin{equation}
		\omega(k)=\sqrt{k^2+m^2}
	\end{equation}
	and $F_j(k)=F(k)\e^{\i kx_j}$, with
	\begin{equation}
		F(k)=\sqrt{\frac{\gamma}{2\pi}}\frac{1}{\sqrt[4]{k^2+m^2}},
	\end{equation}
	where $m>0$ plays the role of an effective mass, and $\gamma>0$ is a coupling constant.
	
	It is easy to verify that all assumptions of this work hold in this case. Both functions are even, and a direct check shows that
	\begin{equation}
		\int_{-\infty}^{\infty}\frac{|F(k)|^2}{\omega(k)}\;\mathrm{d}k=\frac{\gamma}{2\pi}\int_{-\infty}^{\infty}\frac{1}{k^2+m^2}\;\mathrm{d}k=\frac{\gamma}{2m}<\infty,
	\end{equation}
	which is equivalent to Eq.~\eqref{eq:normalization} being verified; consequently, Hypotheses~\ref{hyp1}--\ref{hyp2} hold.
	
	Let us discuss the analyticity properties of the functions. The function $\omega(k)$ does indeed admit an analytic continuation to the complex plane:
	\begin{equation}
		\kappa\in\mathbb{C}\setminus\Bigl((-\i\infty,-\i m]\cup[\i m,\i\infty)\Bigr)\mapsto\omega(\kappa)=\sqrt{\kappa^2+m^2}\in\mathbb{C},
	\end{equation}
	where, above and hereafter, $\sqrt{\zeta}$ is to be interpreted as the \textit{principal} determination of the square root, that is, $\sqrt{\zeta}=\sqrt{|\zeta|}\e^{\i\arg\zeta/2}$; an analytic continuation of $|F(k)|^2$ is obtained analogously. The two points $\zeta=\pm\i m$ are branch points for the function above, the intervals $[\pm\i m,\pm\i\infty)$ being branch cuts. Consequently, Hypothesis~\ref{hyp3} does indeed hold with $\mathcal{K}\subset\mathbb{C}$ being obtained by ``removing'' from the complex plane any neighborhood $I^\pm_\epsilon$ of the branch cuts, i.e. $\mathcal{K}_\epsilon=\mathbb{C}\setminus\{I^+_\epsilon,I^-_\epsilon\}$, with $\epsilon>0$ small enough, its boundary being $\Gamma_\epsilon=\partial\mathcal{K}_\epsilon$; in fact, we will eventually take $\epsilon\to0$.
	
	To verify Hypothesis~\ref{hyp4}, we need to ensure Eq.~\eqref{eq:scalar} to hold. To do that, notice that, for all $\eta,\epsilon>0$,
	\begin{equation}\label{eq:decay}
		\frac{|F(\i\eta\pm\epsilon)|^2}{|\omega(\i\eta\pm\epsilon)|}=\frac{\gamma}{2\pi}\frac{1}{(\eta+m)^2+\epsilon^2},
	\end{equation}
	and a direct check shows that both properties in Eq.~\eqref{eq:scalar} are thus satisfied with $z_0=0$.
	
	To compute the self-energy, we need to study the behavior of the solutions of the equation $\omega(\kappa)=z$. Taking $\mathcal{A}=\{z\in\mathbb{C}:\;\Re z>0\}$, the equation admits a single couple of solutions $\pm\hat{\kappa}(z)$, where
	\begin{equation}
		\hat{\kappa}(z)=\sqrt{z^2-m^2},
	\end{equation}
	so that, for all $z\in\mathcal{A}^+$, Prop.~\ref{prop:prop}(i) implies
	\begin{equation}
		\Sigma_{j\ell}(z)=\i\gamma\,\frac{\e^{\i|x_j-x_\ell|\sqrt{z^2-m^2}}}{\sqrt{z^2-m^2}}+b_{j\ell}(z),
	\end{equation}
	with the contour term being
	\begin{equation}\label{eq:bjlwqed}
		b_{j\ell}(z)=	\frac{\gamma}{2\pi}\int_{\Gamma_\epsilon^+}\frac{\e^{\i\kappa|x_j-x_\ell|}}{\sqrt{\kappa^2+m^2}\left(\sqrt{\kappa^2+m^2}-z\right)}\,\mathrm{d}\kappa.
	\end{equation}
	The equality above holds for all $\epsilon>0$; we are thus free to take the limit $\epsilon\to0$, in which the integral over $\Gamma^+_\epsilon$ becomes the difference between the integrals, on the upper branch cut $[\i m,\i\infty)$, of the left and right limits of the integrand in Eq.~\eqref{eq:bjlwqed}; that is,
	\begin{eqnarray}
		b_{j\ell}(z)&=&-\frac{\gamma}{\pi}\int_{m}^\infty\frac{z\,\e^{-\eta|x_j-x_\ell|}}{\sqrt{\eta^2-m^2}\left(z^2+\eta^2-m^2\right)}\,\mathrm{d}\eta.
	\end{eqnarray}
	Finally, near the real axis, the self energy is continuous for $0<E<m$ and discontinuous for $E>m$. In the first case, we have
	\begin{eqnarray}
		\Sigma_{j\ell}(E)=\gamma\,\frac{\e^{-|x_j-x_\ell|\sqrt{m^2-E^2}}}{\sqrt{m^2-E^2}}+b_{j\ell}(E),\qquad(0<E<m),
	\end{eqnarray}
	while, in the second case,
	\begin{eqnarray}
		\Sigma_{j\ell}(E)=\i\gamma\,\frac{\e^{\i|x_j-x_\ell|\sqrt{E^2-m^2}}}{\sqrt{E^2-m^2}}+b_{j\ell}(E),\qquad(m<E<\infty).
	\end{eqnarray}	
In the second case (and assuming $\varepsilon_1=\ldots=\varepsilon_n$), we are precisely in the case sketched in Subsection~\ref{subsec:dominant}: the dominant contribution to the self-energy is the one corresponding to the solution $\hat{\kappa}(E)=\sqrt{E^2-m^2}$ to the equation $\omega(k)=E$. By neglecting the contour contribution, all real eigenvalues (thus corresponding to bound states in the continuum) would be found at energies satisfying the resonance condition~\eqref{eq:ressing} for some $j$ (i.e., two nearby atoms \textit{resonate}), which, in terms of energies, translates into
\begin{equation}
	E=E_\nu,\qquad E_\nu=\sqrt{\frac{\nu^2\pi^2}{(x_{j+1}-x_j)^2}+m^2}\qquad (\nu\in\mathbb{N}),
\end{equation}
the corresponding eigenvectors having the same qualitative structure extensively discussed in the first paragraph of Subsection~\ref{subsec:dominant}. Taking into account the additional contour contributions, richer (and more complicated) phenomena, like the degeneracy lifting briefly discussed at the end of the previous subsection, enter the game, as already stated in the general case.

\section{Concluding remarks}
In this article, an explicit expression for the self-energy of Friedrichs-Lee models was evaluated under precise assumptions on the dispersion relation $\omega(k)$ of the boson field and the form factors $F_{j\ell}(k)$ modeling the atom-field interaction; the self-energy can be decomposed as the sum of finitely many residue pole terms (poles), each being linked to a (possibly complex) solution of the equation $\omega(k)=E$, plus a contour contribution. No specific choice of either $\omega(k)$ or $F_{j\ell}(k)$ is used in our calculations: the result is valid for a large class of choices of the aforementioned parameters, and is thus suitable to diverse physical scenarios.

We have also provided a detailed analysis of a subclass of such models which describe identical atoms in an infinite line; in this framework, the contribution of real poles will be \textit{dominant}, with nonreal poles and the contour contribution only providing exponentially suppressed corrections. In particular, the case in which there is a single real pole can be analyzed thoroughly: when neglecting the small corrections, the eigenvalue problem in this case has been solved exactly and a physical interpretation of the result was given. As an important \textit{caveat}, we have briefly commented about the role of such corrections, which, albeit small, can become of crucial importance when analyzing degenerate situations. We stress that the distinction between dominant and suppressed contributions to the self-energy could be, indeed, obtained with little effort under more general assumptions.

Finally, while our results have been obtained for a Friedrichs-Lee model characterized by a boson field confined in an infinite line (hence the momentum $k$ ranging on the full real line), e.g. an infinitely long waveguide, similar strategies can been applied as well in order to describe fields confined in other geometries (the simplest examples being a half-line or a cavity). In these cases, the problem will be enriched by the possibility of selecting different boundary conditions that may alter significantly the spectral properties of the system, and thus its dynamics. This will be examined in future works.

\section*{Acknowledgments}
We acknowledge discussions with Paolo Facchi. This work is partially supported by Istituto Nazionale di Fisica Nucleare (INFN) through the project “QUANTUM” and by the Italian National Group of Mathematical Physics (GNFM-INdAM).

\section*{Data availability statement}
Data sharing not applicable to this article as no datasets were generated or analysed during the current study.

\appendix

\section{Behavior of the poles}
\label{app:solutions}
We will provide here additional details about the dependence of the solutions $\pm\hat{\kappa}_s(z)$ on $z$, i.e., heuristically, how they ``move" in $\K$ when varying $z$ in some properly chosen subset $\A$ of the complex plane.	

We will address the problem in the following way. An analytic function $\kappa\in\K\mapsto\omega(\kappa)\in\C$ is uniquely associated to a couple of real-valued, real-differentiable functions\footnote{Here and in the following, with an abuse of notation, we will identify the complex number $\kappa=k+\i\eta\in\C$ with the real couple $(k,\eta)\in\R^2$, and hence $\C$ with $\R^2$.}
\begin{equation}
(k,\eta)\in\K\mapsto u(k,\eta)\in\R,\qquad (k,\eta)\in\K\mapsto v(k,\eta)\in\R
\end{equation}
such that the Cauchy-Riemann equations are satisfied for all $(k,\eta)\in\K$:
\begin{equation}\begin{cases}
\partial_ku(k,\eta)=\partial_\eta v(k,\eta);\\
\partial_\eta u(k,\eta)=-\partial_k v(k,\eta),
\end{cases}
\end{equation}
the correspondence being given by $\omega(k+\i\eta)=u(k,\eta)+\i v(k,\eta)$. By differentiating the previous system and introducing the notation $\bm{\nabla}=\partial_k\,\bm{i}+\partial_\eta\,\bm{j}$ and $\Delta=\bm{\nabla}\cdot\bm{\nabla}$, one obtains
\begin{equation}
\Delta u(k,\eta)=\Delta v(k,\eta)=0,\qquad\bm{\nabla}u(k,\eta)\cdot\bm{\nabla}v(k,\eta)=0,
\end{equation}
i.e. $u(k,\eta)$ and $v(k,\eta)$ are two harmonic scalar fields in $\K\subset\R^2$ whose gradients, which are well-defined in the whole region, are constrained to be orthogonal whenever nonzero. Besides, by definition, the complex derivative of $\omega(\kappa)$ in some $\kappa_0=k_0+\i\eta_0$ will be given by
\begin{equation}
\omega'(k_0+\i\eta_0)=\left(\bm{\nabla}u(k_0,\eta_0)+\i\bm{\nabla}v(k_0,\eta_0)\right)\cdot\bm{h}
\end{equation}
for any unit vector $\bm{h}\in\R^2$; the Cauchy-Riemann equations ensure the independence of this quantity of the direction $\bm{h}$. In particular, $\kappa_0$ is critical (i.e. $\omega'(\kappa_0)=0$) if and only if $\bm{\nabla}u(k_0,\eta_0)=\bm{\nabla}v(k_0,\eta_0)=\bm{0}$, and, as an immediate consequence of the the Cauchy-Riemann equations, $\bm{\nabla}u(k_0,\eta_0)=\bm{0}$ if and only if $\bm{\nabla}v(k_0,\eta_0)=\bm{0}$.

The behavior of the two fields can be conveniently represented through their \textit{equipotential lines}, i.e. the curves implicitly defined by the equations $u(k,\eta)=u_0$ and $v(k,\eta)=v_0$ for some $u_0,v_0\in\R$. Each noncritical point $\kappa_0=(k_0,\eta_0)$ is crossed by one equipotential line for $u(k,\eta)$, namely $u(k,\eta)=u(k_0,\eta_0)$, and one critical line for $v(k,\eta)$, namely $v(k,\eta)=v(k_0,\eta_0)$; by construction the two equipotential lines are respectively orthogonal to $\bm{\nabla}u(k_0,\eta_0)$ and $\bm{\nabla}v(k_0,\eta_0)$ and hence have the same direction as, respectively, $\bm{\nabla}v(k_0,\eta_0)$ and $\bm{\nabla}u(k_0,\eta_0)$. This also implies that the two sets of equipotential lines are orthogonal in every noncritical point.

The situation is different if we consider a critical point, since both gradients vanish in it. It is easy to show that $\kappa_0$ is a critical point of order $m$, i.e. with $\omega^{(s)}(\kappa_0)=0$ for all $s=1,2,\dots,m-1$ and $\omega^{(m)}(\kappa_0)\neq0$ if and only if $m$ equipotential lines for both $u(k,\eta)$ and $v(k,\eta)$, respectively with value $u(k_0,\eta_0)$ and $v(k_0,\eta_0)$, intersect in $(k_0,\eta_0)$; each set of equipotential lines form $m$ equal angles of $2\pi/m$, each one being bisected by an equipotental line of the other set. This simply follows by observing that, near the critical point, the behavior of $\omega(\kappa)$ is the same as the $m$th power:
\begin{equation}
\omega(\kappa)\sim\omega(\kappa_0)+\frac{1}{m!}\omega^{(m)}(\kappa_0)(\kappa-\kappa_0)^m,
\end{equation} 
whose equipotential lines near the critical point $\kappa_0$ behave exactly as described.

Interestingly, the behavior of the solutions of the equation $\omega(\kappa)=z$ can be conveniently described through equipotential lines. This is proven straightforwardly:
\begin{proposition}
	Let $\A$ an open connected subset of the complex plane and let $z\in\A\mapsto\hat{\kappa}_0(z)\in\K$ a solution of the equation $\omega(\kappa)=z$, i.e. $\omega(\hat{\kappa}_0(z))=z$ for all $z\in\A$. Then:
	\begin{itemize}
		\item fixing $E_0\in\R$ and given $I_{E_0}=\{\delta\in\R:\,E_0+\i\delta\in\A\}$, the function $\delta\in I_{E_0}\mapsto\hat{\kappa}_0(E_0+\i\delta)$ has values in an equipotential line $u(k,\eta)=E_0$;
		\item fixing $\delta_0\in\R$ and given $J_{\delta_0}=\{E\in\R:\,E+\i\delta_0\in\A\}$, the function $E\in J_{\delta_0}\mapsto\hat{\kappa}_0(E+\i\delta_0)$ has values in an equipotential line $v(k,\eta)=\delta$.
	\end{itemize}
\end{proposition}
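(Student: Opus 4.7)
The plan is to prove both bullet points by direct substitution, using the Cauchy--Riemann decomposition $\omega(\kappa)=u(k,\eta)+\i v(k,\eta)$ already fixed in the preceding paragraphs. Writing $\hat{\kappa}_0(z)=k(z)+\i\eta(z)$ for the given branch of solutions, the defining equation $\omega(\hat{\kappa}_0(z))=z$ is simply equivalent to the pair of real equations
\begin{equation}
u\bigl(k(z),\eta(z)\bigr)=\Re z,\qquad v\bigl(k(z),\eta(z)\bigr)=\Im z,
\end{equation}
valid for every $z\in\A$. Specializing in the first bullet to $z=E_0+\i\delta$ with $\delta\in I_{E_0}$ makes $\Re z=E_0$ constant along the curve, so $u(k(z),\eta(z))\equiv E_0$, which is precisely the statement that the image of $\delta\mapsto\hat{\kappa}_0(E_0+\i\delta)$ lies on an equipotential line $u(k,\eta)=E_0$. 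The second bullet follows by the symmetric argument: fixing $\Im z=\delta_0$ forces $v(k(z),\eta(z))\equiv\delta_0$.

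The only subtlety worth flagging is that the level sets $\{u=E_0\}$ and $\{v=\delta_0\}$ may have several connected components in $\K$; the claim should therefore be understood as saying that the curve lies in a \emph{single} connected component of the level set. This is automatic from the hypothesis that $\hat{\kappa}_0$ is a continuous (single-valued) solution on the connected open set $\A$, which in turn was supplied by the local application of the implicit function theorem away from critical points in the previous subsection. I do not foresee a real obstacle in the proof: the content of the statement is geometric (and useful for visualizing how the poles migrate in $\K$ as $z$ varies), but the proof itself is a one-line substitution together with separation of real and imaginary parts.
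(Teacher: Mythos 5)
Your proof is correct and is essentially identical to the paper's: both take the real (resp.\ imaginary) part of the defining relation $\omega(\hat{\kappa}_0(z))=z$ and observe that it is constant along the vertical (resp.\ horizontal) line in the $z$-plane. The extra remark about connected components of the level set is a reasonable clarification but does not change the argument.
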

\begin{proof}
	$(i)$ By definition we have $\omega(\hat{\kappa}_0(E_0+\i\delta))=E_0+\i\delta$ and hence $\Re\omega(\hat{\kappa}_0(E_0+\i\delta))=E_0$ for all $\delta$, i.e. $\hat{\kappa}_0(E_0+\i\delta)$ belongs to an equipotential line $u(k,\eta)=E_0$. $(ii)$ is proven identically.	
\end{proof}
In words, starting from some $z_0=E_0+\i\delta_0$ and varying either the imaginary or real part of $z$, the solutions of the equation will move along the equipotential lines. Besides, by our previous discussion, when $z$ approaches a critical value corresponding to a critical point of order $m$ (either by varying $E$ or $\delta$, or also following any other trajectory in the $z$ plane), $m$ simple solutions of the equation ``collide" in the critical point.

The proposition above has a simple consequence which we will use in the proof of the main theorem:
\begin{proposition}\label{rem:notcross}
Let $\mathcal{A}$ a subset of the complex plane and $\hat{\kappa}_0(z)$ a solution which is continuous in $\mathcal{A}$. If $\Im\hat{\kappa}_0(z_0)>0$ for some $z_0\in\mathcal{A}^+$, then $\Im\hat{\kappa}_0(z)>0$ for all $z\in\mathcal{A}^+$.
\end{proposition}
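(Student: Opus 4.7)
The plan is to proceed by contradiction, exploiting the fact that $\omega$ restricted to the real axis is real-valued (Hypothesis~\ref{hyp1}), so any real value taken by $\hat{\kappa}_0$ would force $z=\omega(\hat{\kappa}_0(z))$ onto the real line, contradicting $z\in\mathcal{A}^+$.

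First, I would reduce the statement to a connected-component argument: since the hypothesis is phrased for ``some $z_0\in\mathcal{A}^+$'' and the conclusion for ``all $z\in\mathcal{A}^+$'', the natural interpretation is that $\mathcal{A}^+$ is (path-)connected (or, in the general case, one applies the argument on the path-connected component of $\mathcal{A}^+$ containing $z_0$ and then notes that the rest of the conclusion is vacuous on components not touched by the hypothesis). Assume then that there exists $z_1\in\mathcal{A}^+$ with $\Im\hat{\kappa}_0(z_1)\leq 0$, and pick a continuous path $\gamma:[0,1]\to\mathcal{A}^+$ with $\gamma(0)=z_0$ and $\gamma(1)=z_1$.

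Next I would apply the intermediate value theorem to the real-valued continuous function $t\mapsto \Im\hat{\kappa}_0(\gamma(t))$: since it is strictly positive at $t=0$ and non-positive at $t=1$, there exists $t^*\in(0,1]$ with $\Im\hat{\kappa}_0(\gamma(t^*))=0$. Setting $z^*=\gamma(t^*)\in\mathcal{A}^+$, we then have $\hat{\kappa}_0(z^*)\in\mathbb{R}$. By Hypothesis~\ref{hyp1}, $\omega$ maps real numbers to real numbers, so $\omega(\hat{\kappa}_0(z^*))\in\mathbb{R}$. But by definition of $\hat{\kappa}_0$ as a solution of $\omega(\kappa)=z$ on $\mathcal{A}$, we have $\omega(\hat{\kappa}_0(z^*))=z^*$, which has $\Im z^*>0$ since $z^*\in\mathcal{A}^+$. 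This contradiction proves the claim.

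The proof is essentially immediate, so I do not foresee any serious obstacle; the only delicate point is the implicit connectedness assumption on $\mathcal{A}^+$, which should either be stated explicitly or handled by restricting to the path-connected component of $\mathcal{A}^+$ containing $z_0$. A symmetric statement for $\mathcal{A}^-$ (solutions with negative imaginary part at one point remaining in the lower half-plane) follows by the same argument, and together they justify the labeling convention ``$\hat{\kappa}_s(z)\in\mathcal{K}^+$ for $z\in\mathcal{A}^+$'' used in Proposition~\ref{prop:prop}.
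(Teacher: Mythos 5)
Your proof is correct, and it reaches the conclusion by a more elementary route than the paper. The paper's argument leans on the equipotential-line machinery developed just before the proposition: it observes that the real $\kappa$-axis lies in the level set $v(k,\eta)=\Im\omega(\kappa)=0$, that solutions of $\omega(\kappa)=z$ travel along equipotential lines as $z$ varies, and that a solution cannot cross the level set $v=0$ while $z$ stays in $\A^+$ (where $\Im z>0$). Your version extracts the single fact actually doing the work --- if $\hat{\kappa}_0(z^*)$ were real then $z^*=\omega(\hat{\kappa}_0(z^*))$ would be real, since the analytic extension of $\omega$ agrees with the real-valued dispersion relation on $\R$ --- and combines it with the intermediate value theorem applied to $t\mapsto\Im\hat{\kappa}_0(\gamma(t))$ along a path in $\A^+$. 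This is self-contained, avoids any appeal to harmonicity or to the (somewhat informal) claim that solutions ``move along equipotential lines,'' and makes explicit the continuity and path-connectedness inputs that the paper's argument uses implicitly. Your flag about connectedness of $\A^+$ is well taken: even for $\A$ open and connected, $\A^+=\A\cap\C^+$ need not be connected, and both proofs genuinely need to work on the path-connected component of $\A^+$ containing $z_0$ (one small quibble: on components of $\A^+$ not containing $z_0$ the conclusion is not \emph{vacuous} but simply not derivable from the stated hypothesis, so the clean fix is to add the connectedness assumption to the statement, which is harmless in all the applications in the paper).
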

\begin{proof}
	A solution $\hat{\kappa}_0(z)$ cannot cross an equipotential line $u(k,\eta)=u_0$ (resp. $v(k,\eta)=v_0$) if $z$ ranges in a region with does not contain values with real part equal to $u_0$ (resp. imaginary part equal to $v_0$). In particular, since $\A^+$ has no intersection with the real line, then $\hat{\kappa}_0(z)$ will never cross the real line, since $\R$ is an equipotential line with $v(k,\eta)=0$. This means that, if $\Im\hat{\kappa}_0(z_0)>0$ for \textit{some} $z_0\in\A$, then $\Im\hat{\kappa}_0(z)>0$ for \textit{all} $z\in\A$.
\end{proof}

\section{Proof of Prop.~\ref{prop:prop}}\label{app:proof}

Before proving the result, a preliminary lemma gathering two immediate consequences of Hypothesis.~\ref{hyp4}, will be useful.
\begin{lemma}\label{lemma:b}$(i)$ 
	Let $z\in\C$ and $M,\epsilon>0$ such that $|\omega(\kappa)-z|>\epsilon$ for all $\kappa\in\K$ with $|\kappa|>M$. Then
	\begin{equation}
	\lim_{R\to\infty}\,\max_{\kappa\in \K^+,\,|\kappa|=R}R\,\frac{|G_{j\ell}(\kappa)|}{|\omega(\kappa)-z|}=0.
	\end{equation}
	
	$(ii)$ Let $z\in\C$ and $\epsilon>0$ such that $|\omega(\kappa)-z|>\epsilon$ for all $\kappa\in\Gamma^+$. Then, for all $j\geq\ell$, the quantity
	\begin{equation}
	b_{j\ell}(z)=
	\int_{\Gamma^+}\frac{G_{j\ell}(\kappa)}{\omega(\kappa)-z}\,\mathrm{d}\kappa
	\end{equation}
	is well-defined and satisfies $b_{j\ell}(\overline{z})=\overline{b_{j\ell}(z)}$; if $z=E\in\R$, $b_{j\ell}(E)\in\R$.
\end{lemma}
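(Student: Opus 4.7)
The strategy is to derive both parts by comparison with the reference value $z_0$ supplied by Hypothesis~\ref{hyp4}. The guiding observation, obtained from the triangle inequality, is that whenever $|\omega(\kappa)-z|>\epsilon$ one has
\begin{equation*}
\frac{|\omega(\kappa)-z_0|}{|\omega(\kappa)-z|}\leq 1+\frac{|z-z_0|}{\epsilon},
\end{equation*}
so every absolute-value estimate for the $z$-dependent integrand is controlled, up to a bounded prefactor, by the corresponding estimate for the $z_0$-dependent integrand.

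For part $(i)$ I would factor
\begin{equation*}
R\,\frac{|G_{j\ell}(\kappa)|}{|\omega(\kappa)-z|}=\left(R\,\frac{|G_{j\ell}(\kappa)|}{|\omega(\kappa)-z_0|}\right)\cdot\frac{|\omega(\kappa)-z_0|}{|\omega(\kappa)-z|}
\end{equation*}
and maximize over $\kappa\in\K^+$ with $|\kappa|=R$: the first factor vanishes as $R\to\infty$ by Hypothesis~\ref{hyp4}, while the second is bounded by the inequality above once $R>M$, which is automatic in the limit. The same factorization applied pointwise along $\Gamma^+$ yields the well-definiteness of $b_{j\ell}(z)$ in part $(ii)$, since it shows that $|G_{j\ell}(\kappa)|/|\omega(\kappa)-z|$ is dominated on $\Gamma^+$ by an absolutely integrable function by virtue of the second clause of Hypothesis~\ref{hyp4}.

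To establish $b_{j\ell}(\overline{z})=\overline{b_{j\ell}(z)}$, I will rely on the conjugation identities $\overline{\omega(\kappa)}=\omega(-\overline{\kappa})$ and $\overline{G_{j\ell}(\kappa)}=G_{j\ell}(-\overline{\kappa})$, obtained by combining Eqs.~\eqref{eq:sim5}--\eqref{eq:sim6}, together with the invariance of $\Gamma^+$ as a set under the reflection $\kappa\mapsto-\overline{\kappa}$, which is guaranteed by the symmetry of $\K$ with respect to both coordinate axes. Taking conjugates inside the integral and performing the change of variable $\mu(t)=-\overline{\kappa(t)}$ along a parametrization of $\Gamma^+$ will transform the integrand into $G_{j\ell}(\mu)/(\omega(\mu)-\overline{z})$ and introduce two minus signs: one from $\overline{\kappa'(t)}=-\mu'(t)$, and one from the fact that the reflection reverses the orientation of $\Gamma^+$. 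These cancel, giving the required equality; the reality statement for $z=E\in\R$ is then the trivial case $\overline{z}=z$.

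The only genuinely delicate point will be the orientation bookkeeping in the symmetry argument: I need to verify that the reflection $\kappa\mapsto-\overline{\kappa}$ actually reverses the orientation of $\Gamma^+$ so that the two minus signs above cancel rather than reinforce each other, which in turn rests on fixing $\Gamma^+$ with a definite orientation inherited consistently from $\partial\K$. Once this has been settled, the remaining steps amount to routine applications of the triangle inequality and of the symmetry properties already derived in Section~\ref{sec:self}.
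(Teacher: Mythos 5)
Your proof is correct and, for part $(i)$ and the well-definiteness claim in part $(ii)$, follows essentially the same route as the paper: your inequality $|\omega(\kappa)-z_0|/|\omega(\kappa)-z|\leq 1+|z-z_0|/\epsilon$ is exactly the paper's identity $\frac{1}{\omega(\kappa)-z}=\bigl(1+\frac{z-z_0}{\omega(\kappa)-z}\bigr)\frac{1}{\omega(\kappa)-z_0}$ in disguise, after which $(i)$ is immediate and the convergence of $b_{j\ell}(z)$ follows by domination (note that both you and the paper implicitly read Eq.~\eqref{eq:intbound} as absolute integrability of the integrand along $\Gamma^+$, which is what the domination step actually requires). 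The one place you go beyond the paper is the conjugation identity $b_{j\ell}(\overline{z})=\overline{b_{j\ell}(z)}$: the paper's proof asserts it without argument, whereas you supply the reflection argument $\kappa\mapsto-\overline{\kappa}$ based on $\overline{\omega(\kappa)}=\omega(-\overline{\kappa})$ and $\overline{G_{j\ell}(\kappa)}=G_{j\ell}(-\overline{\kappa})$, which follow from Eqs.~\eqref{eq:sim5}--\eqref{eq:sim6}. Your orientation bookkeeping is also right: since $\Gamma^+$ is setwise invariant under the reflection (after symmetrizing $\K$ as allowed by item $(ii)$ following Hypothesis~\ref{hyp3}), the reflection reverses the orientation of $\Gamma^+$, and this sign cancels the one coming from $\overline{\kappa'(t)}=-\mu'(t)$; the reality of $b_{j\ell}(E)$ for $E\in\R$ is then the special case $\overline{z}=z$.
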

\begin{proof}
	Indeed, we have
	\begin{equation}
	\frac{1}{\omega(\kappa)-z}=\left(1+\frac{z-z_0}{\omega(\kappa)-z}\right)\frac{1}{\omega(\kappa)-z_0}
	\end{equation}
	and thus, if $|\omega(\kappa)-z|>\epsilon$,
	\begin{equation}
	\left|\frac{1}{\omega(\kappa)-z}\right|\leq\left(1+\frac{|z-z_0|}{|\omega(\kappa)-z|}\right)\left|\frac{1}{\omega(\kappa)-z_0}\right|\leq\left(1+\frac{|z-z_0|}{\epsilon}\right)\left|\frac{1}{\omega(\kappa)-z_0}\right|.
	\end{equation}
	By this inequality, the first property follows immediately and the second one follows by dominated convergence.\end{proof}

Let us finally prove Prop.~\ref{prop:prop}.
\begin{proof}[Proof of Prop.~\ref{prop:prop}]
	Since $\S(\overline{z})=\S(z)^\dag$ and $\S(z)^\intercal=\S(z)$, we only need to compute $\Sigma_{j\ell}(z)$ for $j\geq\ell$ and $z\in\A^+$.
	
	$(i)$ First of all, since $\mathcal{A}$ does not intersect $\omega(\Gamma)$, then for all $z\in\A$ $\mathsf{b}(z)$ is well-defined by Lemma~\ref{lemma:b} with $\epsilon=\inf_{\kappa\in\Gamma^+}|\omega(\kappa)-z|>0$; besides, the functions $z\in\mathcal{A}\mapsto\hat{\kappa}_s(z)\in\K$ are continuous in $\mathcal{A}$ and, moreover, analytic away from the set $\mathcal{C}$ of zeros of $\omega'(\kappa)$ (see the discussion in Appendix~\ref{app:solutions}).
	
	Now fix $z\in\A\setminus\omega(\mathcal{C})$. Since the self-energy is well-defined because of Hypothesis~\ref{hyp1}, we can compute it as
	\begin{equation}
	\S(z)=\lim_{R\to\infty}\int_{-R}^R\frac{\G(k)}{\omega(k)-z}\,\mathrm{d}k.
	\end{equation}
	By Hypothesis~\ref{hyp3}, the integrand has a meromorphic continuation in the open connected subset $\K$ of the complex plane, its singularities being the solutions of the equation $\omega(\kappa)=z$; since there are finitely many singularities, for sufficiently large $R$ they all lie inside the circle $C_R=\{\kappa\in\C:\;|\kappa|<R\}$. For such values of $R$, define
	\begin{equation}
	\K^\pm_R=\K^\pm\cap C^\pm_R,
	\end{equation}
	its boundary $\delta\K^\pm_R$ being a piecewise smooth curve which can be decomposed as
	\begin{equation}
	\delta\K^\pm_R=\Gamma^\pm_R\cup \delta C^\pm_R,
	\end{equation}
	where $\Gamma^\pm_R=\{\zeta\in\Gamma^\pm:\;|\zeta|<R\}$ and $\delta C^\pm_R$ is the union of a number of circular arcs with radius $R$; see Fig.~\ref{fig:tikz}. For each couple of solutions $\pm\hat{\kappa}_s(z)$, either they are both real or they belong to distinct half-planes; in particular (see Prop.~\ref{rem:notcross}), each solution belongs either to the positive or negative half-plane for all $z\in\A^+$.
	
	\begin{figure}[t]\centering
		\begin{tikzpicture}[decoration={markings,
			mark=at position 0.8cm with {\arrow[line width=1pt]{<}},
			mark=at position 3.35cm with {\arrow[line width=1pt]{<}},
		}
		]
		
		\draw[name path=A,help lines,->,gray,line width=0.5pt] (-3.6,0) -- (3.6,0) coordinate (xaxis);
		\draw[help lines,->,gray,line width=0.5pt] (0,-1) -- (0,3.3) coordinate (yaxis);
		
		\draw [->,red!80!black,line width=0.8pt] (0.8,0.15) arc (-90:360:0.25);
		
		\draw [->,red!80!black,line width=0.8pt] (-2.0,0.15) arc (-90:360:0.25);
		
		\draw [->,blue!80!black,line width=0.8pt] (-3.18,0) -- (-1.5,0);
		\draw [->,blue!80!black,line width=0.8pt] (-1.6,0) -- (1.5,0);
		\draw [blue!80!black,line width=0.8pt] (1.4,0) -- (3.18,0);
		
		\node [cross out,thick,draw=black,scale=0.5] at (0.8,0.4) {};
		\node [cross out,thick,draw=black,scale=0.5] at (-0.8,-0.4) {};
		
		\node [cross out,thick,draw=black,scale=0.5] at (2.0,-0.4) {};
		\node [cross out,thick,draw=black,scale=0.5] at (-2.0,0.4) {};
		
		\draw[name path=B,thick,color=black] plot [smooth] coordinates {(-3.6,2.91) (-1,1.5) (1,1.5) (3.6,2.91)};
		\draw[thick,color=red!80!black] plot [smooth] coordinates {(-2.4,2.07) (-1.1,1.4) (0,1.252) (1.1,1.4) (2.4,2.07)};
		\draw[->-,red!80!black,line width=0.8pt] (3.157,0) arc (0:41:3.157);
			\draw[->-,red!80!black,line width=0.8pt] (-2.4,2.07) arc (-41.3:0:-3.157);
		\draw[dotted,red!80!black,line width=0.8pt] (-2.4,2.07) arc (-41.3:-140:-3.177);
		
		\path[draw,black,dotted,line width=0.6pt] (0.8,0.4) -- (0.8,0);
		\path[draw,black,dotted,line width=0.6pt] (-0.8,-0.4) -- (-0.8,0);
		
		\path[draw,black,dotted,line width=0.6pt] (2.0,-0.4) -- (2.0,0);
		\path[draw,black,dotted,line width=0.6pt] (-2.0,0.4) -- (-2.0,0);
		
		\node[above,gray!60!black] at (3.7,0) {$\mathrm{Re}\,\kappa$};
		\node[below right,gray!60!black] at (0,3.1) {$\mathrm{Im}\,\kappa$};
		\node[below,black] at (0.8,0) {$\hat{\kappa}_1$};
		\node[above,black] at (-0.8,-0.05) {$-\hat{\kappa}_1$};
		
		\node[below,black] at (-2.0,0) {$-\hat{\kappa}_2$};
		\node[above,black] at (2.0,-0.05) {$\hat{\kappa}_2$};
		\node[black] at (-3,2.95) {$\Gamma^+$};
		\node[red!80!black] at (-0.3,0.97) {$\Gamma^+_R$};
		\node[red!80!black] at (-3.18,1.8) {$\delta C^+_R$};
		\node[red!80!black] at (3.15,1.8) {$\delta C^+_R$};
		\node[below,blue!50!black] at (-3.18,0) {$-R$};
		\node[below,blue!50!black] at (3.18,0) {$R$};
		
		\tikzfillbetween[of=A and B]{yellow, opacity=0.12};
		
		\end{tikzpicture}
		
		\begin{tikzpicture}[decoration={markings,
			mark=at position 0.8cm with {\arrow[line width=1pt]{<}},
			mark=at position 3.35cm with {\arrow[line width=1pt]{<}},
		}
		]
		
		\draw[name path=A,help lines,->,gray,line width=0.5pt] (-3.6,0) -- (3.6,0) coordinate (xaxis);
		\draw[help lines,->,gray,line width=0.5pt] (0,-1) -- (0,3) coordinate (yaxis);
		
		\draw [->,red!80!black,line width=0.8pt] (0.8,0.15) arc (-90:360:0.25);
		
		\draw [->,red!80!black,line width=0.8pt] (-2.0,0.15) arc (-90:360:0.25);
		
		\draw [->,blue!80!black,line width=0.8pt] (-3.6,0) -- (-1.5,0);
		\draw [->,blue!80!black,line width=0.8pt] (-1.6,0) -- (1.5,0);
		\draw [blue!80!black,line width=0.8pt] (1.4,0) -- (3.6,0);
		
		\node [cross out,thick,draw=black,scale=0.5] at (0.8,0.4) {};
		\node [cross out,thick,draw=black,scale=0.5] at (-0.8,-0.4) {};
		
		\node [cross out,thick,draw=black,scale=0.5] at (2.0,-0.4) {};
		\node [cross out,thick,draw=black,scale=0.5] at (-2.0,0.4) {};
		
		\draw[name path=B,thick,color=black] plot [smooth] coordinates {(-3.6,2.91) (-1,1.5) (1,1.5) (3.6,2.91)};
		\draw[thick,color=red!80!black] plot [smooth] coordinates {(-3.6,2.78) (-2.4,2.07) (-1.1,1.4) (0,1.252) (1.1,1.4) (2.4,2.07) (3.6,2.78)};
			
		\path[draw,black,dotted,line width=0.6pt] (0.8,0.4) -- (0.8,0);
		\path[draw,black,dotted,line width=0.6pt] (-0.8,-0.4) -- (-0.8,0);
		
		\path[draw,black,dotted,line width=0.6pt] (2.0,-0.4) -- (2.0,0);
		\path[draw,black,dotted,line width=0.6pt] (-2.0,0.4) -- (-2.0,0);
		
		\node[above,gray!60!black] at (xaxis) {$\mathrm{Re}\,\kappa$};
		\node[below right,gray!60!black] at (yaxis) {$\mathrm{Im}\,\kappa$};
		\node[below,black] at (0.8,0) {$\hat{\kappa}_1$};
		\node[above,black] at (-0.8,-0.05) {$-\hat{\kappa}_1$};
		
		\node[below,black] at (-2.0,0) {$-\hat{\kappa}_2$};
		\node[above,black] at (2.0,-0.05) {$\hat{\kappa}_2$};
		\node[black] at (-3,2.95) {$\Gamma^+$};
			
		\tikzfillbetween[of=A and B]{yellow, opacity=0.12};
		
		\end{tikzpicture}
		
		\caption{Integration contour for a finite value of $R$ (top) and in the limit $R\to\infty$ (bottom): the integrals on the blue and the red contours coincide. The yellow area is the region $\mathcal{K}^+$ of the complex plane.}		
		\label{fig:tikz}
	\end{figure}
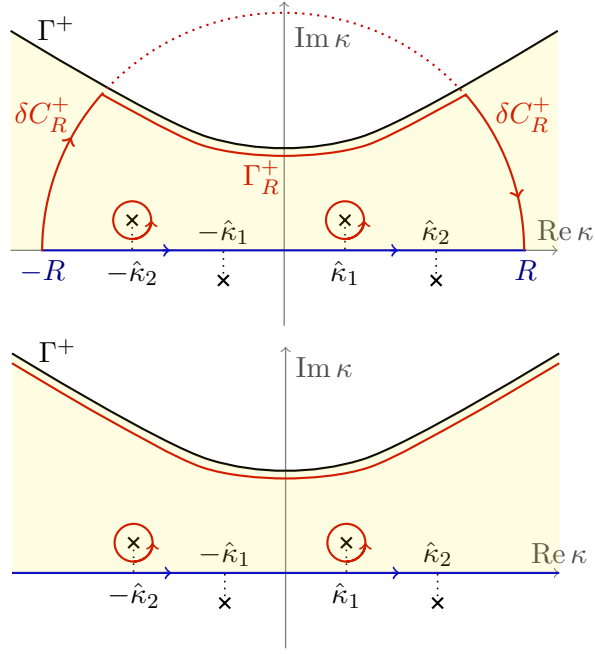\FloatBarrier
	By the residue theorem, the integral on the contour of the region $\K^+_R$ is obtained via the residua of the function $\omega(\kappa)$ in all solutions of the equation $\omega(\kappa)=z$ included in it, i.e. $+\hat{\kappa}_1(z),\ldots,+\hat{\kappa}_r(z)$. Since $\omega'(\hat{\kappa}_s(z))\neq0$, each singularity $\pm\hat{\kappa}_s(z)$ is simple and the corresponding residue can be computed as such:
	\begin{equation}
	\lim_{\kappa\to\pm\hat{\kappa}_s(z)}(\kappa\mp\hat{\kappa}_s(z))\frac{\G(\kappa)}{\omega(\kappa)-z}
	\end{equation}
	and
	\begin{eqnarray}
	\omega(\kappa)-z\sim\omega'(\pm\hat{\kappa}_s(z))(\kappa\mp\hat{\kappa}_s(z))\qquad(\kappa\to\pm\hat{\kappa}_s(z)),
	\end{eqnarray}
	thus
	\begin{equation}
	\lim_{\kappa\to\pm\hat{\kappa}_s(z)}(\kappa\mp\hat{\kappa}_s(z))\frac{\G(\kappa)}{\omega(\kappa)-z}=\frac{\G(\pm\hat{\kappa}_s(z))}{\omega'(\pm\hat{\kappa}_s(z))}\equiv \Z(\pm\hat{\kappa}_s(z)),
	\end{equation}
	where, because of the symmetry properties of $\omega(\kappa)$ and $\G(\kappa)$,
	\begin{equation}
	\Z(-\overline{\kappa})=-\Z(\kappa)^\dag,\qquad \Z(\overline{\kappa})=\Z(\kappa)^\dag.
	\end{equation}
	Applying the residue theorem, we finally obtain
	\begin{equation}
	\int_{-R}^R\frac{\G(k)}{\omega(k)-z}\,\mathrm{d}k=\int_{\delta C^+_R}\frac{\G(\kappa)}{\omega(\kappa)-z}\,\mathrm{d}\kappa+\int_{\Gamma^+_R}\frac{\G(\kappa)}{\omega(\kappa)-z}\,\mathrm{d}\kappa+2\pi\i\,\sum_{s=1}^r\Z(\hat{\kappa}_s(z)),
	\label{eq:selfr}
	\end{equation}	
	and, taking the limit $R\to\infty$ in~\eqref{eq:selfr}, the left-hand term converges to $\S(z)$ and the two $R$-dependent components in the right-hand term converge respectively to the null matrix and to $\mathsf{b}(z)$ because of our hypotheses. The desired result in~\eqref{eq:selfinf} is obtained by using $\S(\overline{z})=\S(z)^\dag$, 
	
	$(ii)$ For $s=1,\dots,r$, either $\Im\hat{\kappa}_s(E)>0$ (hence $s\in I^+(E)$) or $\Im\hat{\kappa}_s(E)=0$ (hence $s\in I^0(E)$). Now, the two sets of poles behave differently when we move continuously from $\A^+$ to $\A^-$ (see Fig.~\ref{fig:poles}):
	\begin{itemize}
		\item the poles $+\hat{\kappa}_s(E+\i\delta)$ with $s\in I^+(E)$ do not cross the real line when $\delta$ becomes negative, and hence they are still included in the contour, yielding a contribution to the self-energy which is continuous from $\A^+$ to $\A^-$;
		\item the poles $+\hat{\kappa}_s(E+\i\delta)$ with $s\in I^0(E)$ \textit{do} cross the real line when $\delta$ becomes negative, and hence they are not be included in the contour; the poles $-\hat{\kappa}_s(E+\i\delta)$ take their place, yielding a contribution to the self-energy which is discontinuous from $\A^+$ to $\A^-$.
	\end{itemize}
	\begin{figure}[t]\centering
		\begin{tikzpicture}
		\node at (0,0) {\includegraphics[scale=0.45]{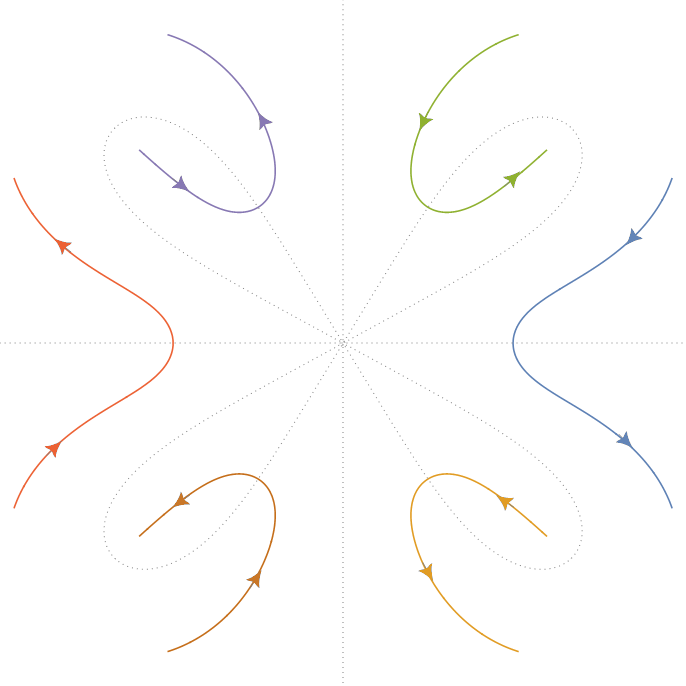}};
		\node[black] at (-2.4,-4.2) {$-\hat{\kappa}_2$};
		\node[black] at (2.4,-2.9) {$-\hat{\kappa}_1$};
		\node[black] at (-2.4,2.9) {$\hat{\kappa}_1$};
		\node[black] at (2.2,4.2) {$\hat{\kappa}_2$};
		\node[black] at (-5.2,-1.6) {$-\hat{\kappa}_3$};
		\node[black] at (5.0,1.6) {$\hat{\kappa}_3$};
		\node[above,gray] at (5.3,0) {$\Re\kappa$};
		\node[right,gray] at (0,5.3) {$\Im\kappa$};
		\node[above right,gray] at (0,0) {$0$};
		\end{tikzpicture}
		\caption{An example: solutions $\pm\hat{\kappa}_s(E_0+\i\delta)$, $s=1,2,3$, of the equation $\omega(\kappa)=E_0+\i\delta$ with $\omega(\kappa)=\frac{k^6}{k^4+1}$, with $E_0=0.02$ and $\delta\in[-1.4,1.4]$; the arrows point in the direction of decreasing $\delta$. The dotted gray lines corresponds to the set of points satisfying $\Im\omega(\kappa)=0$; each solution intersects one of those curves at $\delta=0$. The poles $\pm\kappa_1(E_0+\i\delta)$, $\pm\kappa_2(E_0+\i\delta)$ always belong to $\mathbb{C}^+$, whereas $\pm\kappa_3(E_0+\i\delta)$ is in $\mathbb{C}^\pm$ for $\delta>0$ and in $\mathbb{C}^\mp$ for $\delta<0$; consequently, $I^+(E_0)=\{1,2\}$ and $I^0(E_0)=\{3\}$.}
		\label{fig:poles}
	\end{figure}\FloatBarrier
	Recalling that $\mathsf{b}(\overline{z})=\mathsf{b}(z)^\dag$ and $\Z(-\kappa)=\Z(\overline{\kappa})^\dag$, we can hence write
	\begin{equation}
	\S(E+\i\delta)=\mathsf{b}(E+\i\delta)+2\pi\i\left(\sum_{s\in I^+(E)}\Z(\hat{\kappa}_s(E+\i\delta))+\sum_{s\in I^0(E)}\Z(\hat{\kappa}_s(E+\i\delta))\right);
	\end{equation}
	\begin{equation}
	\S(E-\i\delta)=\mathsf{b}(E-\i\delta)+2\pi\i\left(\sum_{s\in I^+(E)}\Z(\hat{\kappa}_s(E-\i\delta))+\sum_{s\in I^0(E)}\Z\Bigl(\overline{\hat{\kappa}_s(E-\i\delta)}\Bigr)^\dag\right),
	\end{equation}
	and, taking the limit $\delta\downarrow0$,
	\begin{equation}
	\S(E+\i0)=\mathsf{b}(E)+2\pi\i\left(\sum_{s\in I^+(E)}\Z(\hat{\kappa}_s(E))+\sum_{s\in I^0(E)}\Z(\hat{\kappa}_s(E))\right);
	\end{equation}
	\begin{equation}
	\S(E-\i0)=\mathsf{b}(E)+2\pi\i\left(\sum_{s\in I^+(E)}\Z(\hat{\kappa}_s(E))+\sum_{s\in I^0(E)}\Z(\hat{\kappa}_s(E))^\dag\right),
	\end{equation}
	the discontinuity being thus entirely due to the poles which cross the real line:
	\begin{equation}
	\S(E+\i0)-\S(E-\i0)=\i\sum_{s\in I^0(E)}\left(\Z(\hat{\kappa}_s(E))-\Z(\hat{\kappa}_s(E))^\dag\right),
	\end{equation}
	Besides, by Eq.~\eqref{eq:quartet}, for every $s\in I^+(E)$, there is $s'\in I^+(E)$ such that 
	\begin{equation}
	\hat{\kappa}_{s'}(E)=-\overline{\hat{\kappa}_s(E)}\implies \Z(\hat{\kappa}_{s'}(E))=\Z(-\overline{\hat{\kappa}_s(E)})=-\Z(\hat{\kappa}_s(E))^\dag,
	\end{equation}
	hence, because of the presence of the imaginary factor $2\pi\i$, the contribution of the poles $\hat{\kappa}_s(E)$ for $s\in I^+(E)$ is purely hermitian. The same argument cannot be applied to the poles with index $s\in I^0(E)$, whose contribution will therefore have both a hermitian and anti-hermitian part. 
	
	We finally get
	\begin{equation}
	\Re\S(E\pm\i0)=\mathsf{b}(E)+\i\left(\sum_{s\in I^+(E)}\Z(\hat{\kappa}_s(E))+\sum_{s\in I^0(E)}\Re\Z(\hat{\kappa}_s(E))\right);
	\end{equation}
	\begin{equation}
	\Im\S(E\pm\i0)=\pm\i\sum_{s\in I^0(E)}\Im\Z(\hat{\kappa}_s(E)).
	\end{equation}
	
	$(iii)$ is an immediate consequence of the continuity of $z\in\A\mapsto\hat{\kappa}_s(z)\in\K$.
\end{proof}

\section{Proof of Prop.~\ref{prop:prop2}}\label{app:proof2}
First of all, let us introduce the auxiliary matrix $\mathsf{\Lambda}_n(\phi_1,\phi_2,\dots,\phi_{n-1})$, with $\phi_1,\dots,\phi_{n-1}\in\C$, defined as
\begin{equation}
[\mathsf{\Lambda}_n(\phi_1,\dots,\phi_{n-1})]_{j\ell}=\begin{cases}1,&j=\ell;\\\prod_{k=\ell}^{j-1}\phi_{k},&j>\ell
\end{cases}
\end{equation}
and $[\mathsf{\Lambda}_n(\phi_1,\dots,\phi_{n-1})]_{\ell j}=[\mathsf{\Lambda}_n(\phi_1,\dots,\phi_{n-1})]_{j\ell}$, i.e., explicitly,
\begin{equation}
\mathsf{\Lambda}_{n}(\phi_1,\dots)=\begin{pmatrix}  
1&\phi_1&\phi_1\phi_2&\phi_1\phi_2\phi_3&\dots&\prod_{j=1}^{n-2}\phi_j&\prod_{j=1}^{n-1}\phi_j\\
\phi_1&1&\phi_2&\phi_2\phi_3&\dots&\prod_{j=2}^{n-2}\phi_j&\prod_{j=2}^{n-1}\phi_j\\
\phi_1\phi_2&\phi_2&1&\phi_3&\dots&\prod_{j=3}^{n-2}\phi_j&\prod_{j=3}^{n-1}\phi_j\\
\phi_1\phi_2\phi_3&\phi_2\phi_3&\phi_3&1&\dots&\prod_{j=4}^{n-2}\phi_j&\prod_{j=4}^{n-1}\phi_j\\
\vdots&\vdots&\vdots&\vdots&\ddots&\vdots&\vdots\\
\prod_{j=1}^{n-2}\phi_j&\prod_{j=2}^{n-2}\phi_j&\prod_{j=3}^{n-2}\phi_j&\prod_{j=4}^{n-2}\phi_j&\dots&1&\phi_{n-1}\\
\prod_{j=1}^{n-1}\phi_j&\prod_{j=2}^{n-1}\phi_j&\prod_{j=3}^{n-1}\phi_j&\prod_{j=4}^{n-1}\phi_j&\dots&\phi_{n-1}&1
\end{pmatrix}
\label{an}
\end{equation}
We have
\begin{equation}
\mathsf{\Phi}_n(\kappa)=\mathsf{\Lambda}_n\left(\e^{\i\kappa(x_2-x_1)},\dots,\e^{\i\kappa(x_n-x_{n-1})}\right),
\end{equation}
i.e.\ the matrix $\mathsf{\Phi}_n(\kappa)$ (as well as its characteristic polynomial) is simply obtained by setting $\phi_j=\e^{\i\kappa|x_{j+1}-x_j|}$ in the expression of the matrix $\mathsf{\Lambda}_n(\phi_1,\dots,\phi_{n-1})$.

We can now proceed with the proof.
\begin{proof}[Proof of Prop.~\ref{prop:prop2}]
	By Eq.~\eqref{an}, the last and penultimate rows of $\mathsf{\Lambda}_n(\phi_1,\dots,\phi_{n-1})$, as well as its last and penultimate column, are proportional through the factor $\phi_{n-1}$, except the term with $j=\ell=n$. Applying elementary properties of the determinant, we have
	\begin{equation}
	\begin{vmatrix}
	1-\lambda&\phi_1&\phi_1\phi_2&\phi_1\phi_2\phi_3&\dots&\prod_{j=1}^{n-2}\phi_j&\prod_{j=1}^{n-1}\phi_j\\
	\phi_1&1-\lambda&\phi_2&\phi_2\phi_3&\dots&\prod_{j=2}^{n-2}\phi_j&\prod_{j=2}^{n-1}\phi_j\\
	\phi_1\phi_2&\phi_2&1-\lambda&\phi_3&\dots&\prod_{j=3}^{n-2}\phi_j&\prod_{j=3}^{n-1}\phi_j\\
	\phi_1\phi_2\phi_3&\phi_2\phi_3&\phi_3&1-\lambda&\dots&\prod_{j=4}^{n-2}\phi_j&\prod_{j=4}^{n-1}\phi_j\\
	\vdots&\vdots&\vdots&\vdots&\ddots&\vdots&\vdots\\
	\prod_{j=1}^{n-2}\phi_j&\prod_{j=2}^{n-2}\phi_j&\prod_{j=3}^{n-2}\phi_j&\prod_{j=4}^{n-2}\phi_j&\dots&1-\lambda&\phi_{n-1}\\
	\prod_{j=1}^{n-1}\phi_j&\prod_{j=2}^{n-1}\phi_j&\prod_{j=3}^{n-1}\phi_j&\prod_{j=4}^{n-1}\phi_j&\dots&\phi_{n-1}&1-\lambda
	\end{vmatrix}=\nonumber
	\label{pn}
	\end{equation}
	\begin{equation}
	=\phi_{n-1}^2\begin{vmatrix}
	1-\lambda&\phi_1&\phi_1\phi_2&\phi_1\phi_2\phi_3&\dots&\prod_{j=1}^{n-2}\phi_j&\prod_{j=1}^{n-2}\phi_j\\
	\phi_1&1-\lambda&\phi_2&\phi_2\phi_3&\dots&\prod_{j=2}^{n-2}\phi_j&\prod_{j=2}^{n-2}\phi_j\\
	\phi_1\phi_2&\phi_2&1-\lambda&\phi_3&\dots&\prod_{j=3}^{n-2}\phi_j&\prod_{j=3}^{n-2}\phi_j\\
	\phi_1\phi_2\phi_3&\phi_2\phi_3&\phi_3&1-\lambda&\dots&\prod_{j=4}^{n-2}\phi_j&\prod_{j=4}^{n-2}\phi_j\\
	\vdots&\vdots&\vdots&\vdots&\ddots&\vdots&\vdots\\
	\prod_{j=1}^{n-2}\phi_j&\prod_{j=2}^{n-2}\phi_j&\prod_{j=3}^{n-2}\phi_j&\prod_{j=4}^{n-2}\phi_j&\dots&1-\lambda&1\\
	\prod_{j=1}^{n-2}\phi_j&\prod_{j=2}^{n-2}\phi_j&\prod_{j=3}^{n-2}\phi_j&\prod_{j=4}^{n-2}\phi_j&\dots&1&\frac{1-\lambda}{\phi_{n-1}^2}
	\end{vmatrix}=\nonumber
	\label{pn2}
	\end{equation}
	\begin{equation}
	=\phi_{n-1}^2\begin{vmatrix}
	1-\lambda&\phi_1&\phi_1\phi_2&\phi_1\phi_2\phi_3&\dots&\prod_{j=1}^{n-2}\phi_j&0\\
	\phi_1&1-\lambda&\phi_2&\phi_2\phi_3&\dots&\prod_{j=2}^{n-2}\phi_j&0\\
	\phi_1\phi_2&\phi_2&1-\lambda&\phi_3&\dots&\prod_{j=3}^{n-2}\phi_j&0\\
	\phi_1\phi_2\phi_3&\phi_2\phi_3&\phi_3&1-\lambda&\dots&\prod_{j=4}^{n-2}\phi_j&0\\
	\vdots&\vdots&\vdots&\vdots&\ddots&\vdots&\vdots\\
	\prod_{j=1}^{n-2}\phi_j&\prod_{j=2}^{n-2}\phi_j&\prod_{j=3}^{n-2}\phi_j&\prod_{j=4}^{n-2}\phi_j&\dots&1-\lambda&\lambda\\
	\prod_{j=1}^{n-2}\phi_j&\prod_{j=2}^{n-2}\phi_j&\prod_{j=3}^{n-2}\phi_j&\prod_{j=4}^{n-2}\phi_j&\dots&1&\frac{1-\lambda}{\phi_{n-1}^2}-1
	\end{vmatrix}=\nonumber
	\label{pn3}
	\end{equation}
	\begin{equation}
	=\phi_{n-1}^2\begin{vmatrix}
	1-\lambda&\phi_1&\phi_1\phi_2&\phi_1\phi_2\phi_3&\dots&\prod_{j=1}^{n-2}\phi_j&0\\
	\phi_1&1-\lambda&\phi_2&\phi_2\phi_3&\dots&\prod_{j=2}^{n-2}\phi_j&0\\
	\phi_1\phi_2&\phi_2&1-\lambda&\phi_3&\dots&\prod_{j=3}^{n-2}\phi_j&0\\
	\phi_1\phi_2\phi_3&\phi_2\phi_3&\phi_3&1-\lambda&\dots&\prod_{j=4}^{n-2}\phi_j&0\\
	\vdots&\vdots&\vdots&\vdots&\ddots&\vdots&\vdots\\
	\prod_{j=1}^{n-2}\phi_j&\prod_{j=2}^{n-2}\phi_j&\prod_{j=3}^{n-2}\phi_j&\prod_{j=4}^{n-2}\phi_j&\dots&1-\lambda&\lambda\\
	0&0&0&0&\dots&\lambda&\frac{1-\lambda}{\phi_{n-1}^2}-(1+\lambda)
	\end{vmatrix}
	\label{pn4}\end{equation}
	where we have divided the last row and column by $\phi_n$, subtracted the penultimate column from the last column, and finally subtracted the penultimate row from the last row. 
	
	By expanding the determinant along the last row (or column) we will get a term proportional to the determinant of $\mathsf{\Lambda}_{n-1}(\phi_1,\dots,\phi_{n-2})$, plus another term proportional to the determinant of a matrix obtained from $\mathsf{\Lambda}_{n-1}(\phi_1,\dots,\phi_{n-2})$ by replacing the last column (or row) with a column in which the only nonzero term is the one with $j=\ell=n-1$; a further expansion of the determinant yields a term proportional to the determinant of $\mathsf{\Lambda}_{n-2}(\phi_1,\dots,\phi_{n-3})$ and, substituting $\phi_n=\e^{\i\kappa(x_n-x_{n-1})}$, the recurrence formula~\eqref{recurr}.
	
	In particular, taking $\lambda=0$, Eq.~\eqref{recurr} simplifies as follows:
	\begin{equation}
	p_n(0,\kappa)=[1-\e^{2\i\kappa(x_n-x_{n-1})}]p_{n-1}(0,\kappa),
	\end{equation}
	which is a recurrence equation of the first order with constant coefficients, whose solution can be thus obtained:
	\begin{equation}
	p_n(0,\kappa)=\prod_{j=1}^{n-1}[1-\e^{2\i\kappa(x_{j+1}-x_{j})}]p_1(0,\kappa)=\prod_{j=1}^{n-1}[1-\e^{2\i\kappa(x_{j+1}-x_{j})}].
	\end{equation}
\end{proof}

\end{document}